\def\@thm#1#2#3{%
  \ifhmode\unskip\unskip\par\fi
  \normalfont
  \trivlist
  \let\thmheadnl\relax
  \let\thm@swap\@gobble
  \thm@notefont{\bfseries\upshape}
  \thm@headpunct{.}
  \thm@headsep 5\p@ plus\p@ minus\p@\relax
  \thm@space@setup
  #1
  \@topsep \thm@preskip               
  \@topsepadd \thm@postskip           
  \def\@tempa{#2}\ifx\@empty\@tempa
    \def\@tempa{\@oparg{\@begintheorem{#3}{}}[]}%
  \else
    \refstepcounter{#2}%
    \def\@tempa{\@oparg{\@begintheorem{#3}{\csname the#2\endcsname}}[]}%
  \fi
  \@tempa
}
\newcommand{\ghd}{\textsc{ghd}\xspace}
\newcommand{\ort}{\textsc{ort}\xspace}
\renewcommand{\b}{\{-1,1\}}
\newcommand\lambdaa\lambda
\newtheorem{proposition}{Proposition}[section]
\newtheorem{theorem}[proposition]{Theorem}
\newtheorem{fact}[proposition]{Fact}
\newtheorem{lemma}[proposition]{Lemma}
\newtheorem{claim}[proposition]{Claim}
\newtheorem{conjecture}[proposition]{Conjecture}
\theoremstyle{definition}
\theoremstyle{plain}
\DeclareMathOperator{\cb}{cb}
\DeclareMathOperator{\cost}{cost}
\DeclareMathOperator{\DD}{D}
\DeclareMathOperator{\E}{\mathbb{E}}
\DeclareMathOperator{\h}{H}
\DeclareMathOperator{\I}{I}
\DeclareMathOperator{\IC}{IC}
\DeclareMathOperator{\icost}{icost}
\DeclareMathOperator{\out}{out}
\DeclareMathOperator{\polylog}{polylog}
\DeclareMathOperator{\proj}{proj}
\DeclareMathOperator{\RR}{R}
\DeclareMathOperator{\scb}{scb}
\DeclareMathOperator{\sign}{sign}
\DeclareMathOperator{\spn}{span}
\DeclareMathOperator{\tail}{tail}
\DeclareMathOperator{\Var}{Var}
\newcommand{\eps}{\varepsilon}
\newcommand{\epsmono}{\eps\mbox{-mono}}
\newcommand{\R}{\mathbb{R}}
\newcommand{\cA}{\mathcal{A}}
\newcommand{\cB}{\mathcal{B}}
\newcommand{\cS}{\mathcal{S}}
\newcommand{\cX}{\mathcal{X}}
\newcommand{\cY}{\mathcal{Y}}
\newcommand{\cZ}{\mathcal{Z}}
\newcommand{\ceq}{\subseteq}
\newcommand{\ang}[1]{\langle{#1}\rangle}
\newcommand{\ceil}[1]{\lceil{#1}\rceil}
\newcommand{\dd}[2]{\mathrm{D}({#1}\parallel{#2})}
\newcommand{\dkl}[2]{\mathrm{D}_{\mathrm{KL}}({#1}\parallel{#2})}
\newcommand{\ip}[2]{\langle #1,#2 \rangle}
\newcommand{\lip}[2]{\left\langle #1,#2 \right\rangle}
\newcommand{\nm}[1]{\|#1\|}
\newcommand{\eat}[1]{}
\title{Information Complexity versus Corruption and\\ Applications to Orthogonality and Gap-Hamming
  \thanks{Work supported in part by NSF Grant IIS-0916565.}%
}
\author{
  Amit Chakrabarti \qquad Ranganath Kondapally \qquad Zhenghui Wang\\
  \mbox{}\\
  \normalsize Department of Computer Science, Dartmouth College\\
  \normalsize Hanover, NH 03755, USA\\
  \normalsize \{ac,\,rangak,\,zhenghui\}@cs.dartmouth.edu
}
\date{}
\begin{document}
\maketitle

\thispagestyle{empty}

\begin{abstract}

  Three decades of research in communication complexity have led to the
  invention of a number of techniques to lower bound randomized communication
  complexity. The majority of these techniques involve properties of large
  submatrices (rectangles) of the truth-table matrix defining a communication
  problem. The only technique that does not quite fit is information
  complexity, which has been investigated over the last decade.
  Here, we connect information complexity to one of the most powerful
  ``rectangular'' techniques: the recently-introduced smooth corruption (or
  ``smooth rectangle'') bound. We show that the former subsumes the latter
  under rectangular input distributions. We conjecture that this subsumption
  holds more generally, under arbitrary distributions, which would resolve the
  long-standing direct sum question for randomized communication.

  As an application, we obtain an optimal $\Omega(n)$ lower bound on the
  information complexity---under the {\em uniform distribution}---of the
  so-called orthogonality problem (ORT), which is in turn closely related to
  the much-studied Gap-Hamming-Distance (GHD).  The proof of this bound is
  along the lines of recent communication lower bounds for GHD, but we
  encounter a surprising amount of additional technical detail.

\end{abstract}



\section{Introduction}

The basic, and most widely-studied, notion of communication complexity deals
with problems in which two players---Alice and Bob---engage in a communication
protocol designed to ``solve a problem'' whose input is split between them. We
shall focus exclusively on this model here, and we shall be primarily
concerned with the problem of computing a Boolean function
$f:\cX\times\cY\to\b$. As is often the case, we are most interested in lower
bounds.

\subsection{Lower Bound Techniques and the Odd Man Out}

The preeminent textbook in the field remains that of Kushilevitz and
Nisan~\cite{KushilevitzNisan-book}, which covers the basics as well as
several advanced topics and applications. Scanning that textbook, one
finds a number of lower bounding techniques, i.e., techniques for
proving lower bounds on $\DD(f)$ and $\RR(f)$, the deterministic and
randomized (respectively) communication complexities of $f$. Some of the
more important techniques are the fooling set technique, log rank,
discrepancy and corruption.\footnote{Though the corruption technique is
discussed in Kushilevitz and Nisan, the term ``corruption'' is due to
Beame et al.~\cite{BeamePSW06}. The technique has also been called
``one-sided discrepancy'' and ``rectangle method''~\cite{Klauck03} by
other authors.} Research postdating the publication of the book has
produced a number of other such techniques, including the factorization
norms method~\cite{LinialS09}, the pattern matrix
method~\cite{Sherstov08}, the partition bound and the smooth
corruption\footnote{Jain and Klauck~\cite{JainK10} used the term
``smooth rectangle bound'', but we shall prefer the more descriptive
term ``corruption'' to ``rectangle'' throughout this article.}
bound~\cite{JainK10}.  Notably, all of these techniques ultimately boil
down to a fundamental fact called the {\em rectangle property}. One way
of stating it is that each {\em fiber} of a deterministic protocol,
defined as a maximal set of inputs $(x,y) \in \cX\times\cY$ that result
in the same communication transcript, is a combinatorial rectangle in
$\cX\times\cY$. The aforementioned lower bound techniques ultimately
invoke the rectangle property on a protocol that computes $f$; for
randomized lower bounds, (the easy direction of) Yao's minimax lemma
also comes into play.

One recent technique is an odd man out: namely, {\em information
complexity}, which was formally introduced by Chakrabarti et
al.~\cite{ChakrabartiSWY01}, generalized in subsequent
work~\cite{BarYossefJKS04,JayramKS03,BarakBCR10}, though its ideas
appear in the earlier work of Ablayev~\cite{Ablayev96} (see also Saks
and Sun~\cite{SaksS02}). Here, one defines an {\em information cost}
measure for a protocol that captures the ``amount of information
revealed'' during its execution, and then considers the resulting
complexity measure $\IC(f)$, for a function $f$. A precise definition of
the cost measure admits a few variants, but all of them quite naturally
lower bound the corresponding communication cost. The power of this
technique comes from a natural direct sum property of information cost,
which allows one to easily lower bound $\IC(f)$ for certain
well-structured functions $f$. Specifically, when $f$ is a
``combination'' of $n$ copies of a simpler function $g$, one can often scale
up a lower bound on $\IC(g)$ to obtain $\IC(f) \ge \Omega(n\,\IC(g))$.
The burden then shifts to lower bounding $\IC(g)$, and at this stage the
rectangle property is invoked, {\em but on protocols for $g$, not $f$}.

A nice consequence of lower bounding $\RR(f)$ via a lower bound on
$\IC(f)$ is that one then obtains a {\em direct sum theorem} for free:
that is, we obtain the bound $\RR(f^n) \ge \Omega(n\,\IC(f))$ as an
almost immediate corollary. We shall be more precise about this in
Section~\ref{ap-sec:prelim}.

\subsection{First Contribution: Rectangular versus Informational Methods}

It is natural to ask how, quantitatively, these numerous lower bounding
techniques relate to one another. One expects the various ``rectangular''
techniques to relate to one another, and indeed several such results are
known~\cite{Klauck03,LinialS09,JainK10}.  Here, we relate the
``informational'' technique to one of the most powerful rectangular
techniques, with respect to randomized communication complexity. To motivate
our first theorem, we begin with a sweeping conjecture.

\begin{conjecture} \label{ap-conj:main}
  The best information complexity lower bound on $\RR(f)$ is,
  asymptotically, at least as good as the smooth corruption (a.k.a.,
  smooth rectangle) bound, and hence, at least as good as the
  corruption, smooth discrepancy and discrepancy bounds.
\end{conjecture}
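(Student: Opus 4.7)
The plan is to follow the template of the ``rectangular distribution'' subsumption (which the paper establishes in its main theorem) and then identify exactly where the argument breaks under arbitrary input distributions. Fix a function $f$ together with a ``hard'' distribution $\mu$ for which the smooth corruption bound certifies $\RR(f) \ge T$: this means, up to some smoothing slack, that there is a set $G \ceq \cX\times\cY$ on which $f$ is essentially constant, and every combinatorial rectangle $R$ satisfying $\mu(R \cap G) \ge 2^{-T}$ must also satisfy $\mu(R \setminus G) \ge \Omega(\mu(R \cap G))$. The target is to show that every randomized protocol $\Pi$ computing $f$ to small $\mu$-error has $\IC_\mu(\Pi) \ge \Omega(T)$.

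The natural attack is to look at the leaves (fibers) of $\Pi$, which by the rectangle property form a partition of $\cX\times\cY$ into rectangles $R_1,R_2,\ldots$. First, I would argue that a typical leaf carries ``small'' $\mu$-mass: concretely, I would rewrite $\IC_\mu(\Pi)$ as an average of $\log(1/\mu(R_\ell))$-type quantities using the standard identification of information cost with expected log-likelihood of the transcript. Second, I would invoke the smooth corruption hypothesis: any leaf whose $\mu$-mass in $G$ exceeds $2^{-T}$ must contain a constant fraction of mass outside $G$, so the protocol must err on such leaves---contradicting its correctness unless the typical leaf has $\mu(R_\ell) \le 2^{-T}$. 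Combining these two ingredients would give $\IC_\mu(\Pi) \ge \Omega(T)$ by a Markov-style selection of ``big and mostly-correct'' leaves, closing the loop.

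The hard part---and the reason this is a conjecture rather than a theorem---is Step 1 of the preceding paragraph. When $\mu = \mu_X \times \mu_Y$ is rectangular, the cut-and-paste/rectangle lemmas cleanly convert information cost into $\log(1/\mu_X(A_\ell)) + \log(1/\mu_Y(B_\ell))$ for a leaf $R_\ell = A_\ell\times B_\ell$, which is exactly $\log(1/\mu(R_\ell))$. Under a general $\mu$, however, the conditional distribution of the inputs given the transcript at a leaf is \emph{not} a product distribution, so the information revealed by reaching $R_\ell$ is not controlled by $\log(1/\mu(R_\ell))$; this is the exact step where Bar-Yossef--Jayram--Kumar--Sivakumar style reasoning requires rectangularity.

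Two natural rescue strategies, neither of which I see how to push through, are: (i) decompose $\mu$ as a small mixture of rectangular measures, apply the rectangular bound componentwise, and average---this fails because any such decomposition either destroys the smooth corruption guarantee or requires exponentially many components; and (ii) first compress $\Pi$ to a protocol of communication cost $\tO(\IC_\mu(\Pi))$ and then apply the corruption bound directly---this would work, but an unconditional compression theorem at these parameters is not known and is essentially equivalent to the direct-sum question that Conjecture~\ref{ap-conj:main} is flagged as capturing. Thus I would present the plan as the rectangular-case proof together with an explicit identification of the product-conditional-distribution step as the obstacle, and view any partial progress (e.g., handling mixtures of few rectangular measures, or proving compression under extra structural assumptions on $f$) as a meaningful step toward the full conjecture.
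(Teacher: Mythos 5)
Conjecture~\ref{ap-conj:main} is not proved in the paper---only the rectangular special case (Theorem~\ref{ap-thm:scb-ic}) is established---and you correctly treat the general statement as open, which is the right framing. Your sketch of the rectangular case, however, diverges from the paper's actual argument in a way that matters. You propose to write $\IC^\rho_\eps(f)$ as an average over protocol leaves $R_\ell$ of $\log(1/\rho(R_\ell))$ and then run a Markov-style selection of large mostly-correct leaves. For a deterministic protocol this identity is fine, but after fixing the public coins the relevant object is a \emph{private-coin randomized} protocol, and there $\icost^\rho(P) = \I(XY:T) = \E_{t}\big[\dkl{\sigma_t}{\rho}\big]$ where $\sigma_t = (\rho \mid T=t)$. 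This is an expected KL divergence, not an expected negative log of leaf mass, and the two can be far apart: a transcript can have a posterior concentrated on a tiny fraction of its support rectangle while still being close to $\rho$ in KL. So step~1 of your plan has a gap before you even reach the general-distribution obstacle.

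The paper's route around this is structurally different and worth internalizing: Lemma~\ref{ap-lem:low-distortion} picks a \emph{single} good transcript $t$ that simultaneously has the right output, small $\dkl{\sigma_t}{\rho}$, and small posterior error; Lemma~\ref{lem:app-main} then extracts from $\sigma_t$ a rectangle $L = \cA\times\cB$ that is large under $\sigma_t$ and low-corruption; and---the ingredient your sketch lacks---the Substate Theorem (Fact~\ref{ap-fact:substate}) converts largeness under $\sigma_t$ together with small $\dkl{\sigma_t}{\rho}$ into a lower bound on $\rho(L)$, contradicting the corruption bound. Rectangularity of $\rho$ is used precisely inside Lemma~\ref{lem:app-main}: the private-coin rectangle property gives $\Pr[T=t\mid x,y]=q_1(x)q_2(y)$, and when $\rho=\eta_1\otimes\eta_2$ the fraction of each row's $q_2$-weighted mass falling in $\cB$ is the same for all $x$, which is what makes the heavy rows and heavy columns overlap to form a single large rectangle. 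Your diagnosis of the general obstruction (non-product posterior) is in the right neighborhood, and your two rescue strategies and their failure modes are sensible observations, but the proved special case does not go via a leaf-mass decomposition; it goes via a single transcript plus the Substate Theorem.
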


We point out that a very recent manuscript of Kerenidis et
al.~\cite{Kerenidis+12} claims to have settled this conjecture (for a natural
setting of parameters).  Since this work was done independent of theirs, and
due to the short interval between this writing and theirs, we shall continue
to label the statement as (our) conjecture.

In conjunction with the results of Jain and Klauck~\cite{JainK10}, the above
conjecture states that information complexity subsumes just about every other
lower bound technique for $\RR(f)$. All of these lower bound techniques
involve a choice of an input distribution. What we are able to prove is a
special case of the conjecture: the case when the input distributions
involved are rectangular.\footnote{Some authors use the term ``product
distribution'' for what we call rectangular distributions.} The statement
below is somewhat informal and neither fully detailed nor fully general: a
precise version appears as Theorem~\ref{ap-thm:scb-ic}.

\begin{theorem} \label{ap-thm:scb-ic-inf}
  Let $\rho$ be a rectangular input distribution for a communication
  problem $f:\b^n\times\b^n\to\b$. Then, with respect to $\rho$, for
  small enough errors $\eps$, the information complexity bound
  $\IC^\rho_\eps(f)$ is asymptotically as good as the smooth corruption
  bound $\scb^\rho_{400\eps,\eps}(f)$ with error parameter $400\eps$ and
  perturbation parameter $\eps$. That is, we have $\IC^\rho_\eps(f) =
  \Omega(\scb^\rho_{400\eps,\eps}(f))$.
\end{theorem}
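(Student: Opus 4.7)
The plan is to start from any public-coin protocol $\Pi$ that computes $f$ with distributional error $\eps$ under $\rho = \mu \times \nu$ and has $\IC^\rho_\eps(\Pi) \leq I$. For each transcript $\tau$, let $R_\tau = A_\tau \times B_\tau$ denote the combinatorial rectangle of inputs inducing $\tau$, and let $z_\tau$ be the protocol's output on $\tau$. I want to show that for every $g$ with $\|f - g\|_{1,\rho} \leq \eps$ and every $z \in \b$, there exists a rectangle $R$ with $\rho(R) \geq 2^{-O(I)}$ and $\rho(R \cap g^{-1}(\bar z)) \leq 400\eps \cdot \rho(R)$; this is precisely the witness needed to certify $\scb^\rho_{400\eps,\eps}(f) = O(I)$.

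The first step, and the only one that genuinely exploits rectangularity of $\rho$, is to derive the identity
$$\IC^\rho(\Pi) \;=\; \sum_\tau \rho(R_\tau)\,\log \tfrac{1}{\rho(R_\tau)}.$$
Under product $\rho$, the posterior of $X$ given $(Y,\tau)$ is simply $\mu$ renormalized on $A_\tau$, because $X \perp Y$; thus the $\mathrm{KL}$-divergence hidden inside $\I(X;\Pi\mid Y)$ evaluates to $\log(1/\mu(A_\tau))$, and a symmetric calculation on the $Y$-side sums to the stated identity. Morally, this says that a low-information protocol is one whose rectangles are \emph{large on average} in the log-measure sense.

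With this in hand, a two-pronged Markov argument closes the proof. Since $\|f - g\|_{1,\rho} \leq \eps$, the protocol computes $g$ with error at most $2\eps$, so under the distribution that weights transcripts by $\rho(R_\tau)$ one has $\E_\tau[\mathrm{err}_g(\tau)] \leq 2\eps$ alongside $\E_\tau[\log(1/\rho(R_\tau))] \leq I$. For a fixed label $z$, I condition on the event $\{z_\tau = z\}$, whose probability $p_z$ sits within $2\eps$ of $\rho(g^{-1}(z))$; after conditioning, the two expectations grow by at most a factor $1/p_z$. Applying Markov to each then locates a transcript $\tau_0$ with $z_{\tau_0} = z$, $\rho(R_{\tau_0}) \geq 2^{-O(I/p_z)}$, and $\mathrm{err}_g(\tau_0) \leq O(\eps/p_z)$. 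When $p_z = \Omega(1)$, this furnishes the desired witness directly; when $p_z$ is smaller than a fixed constant, $\rho(g^{-1}(z))$ is itself $O(\eps)$, and the corruption bound at label $z$ becomes trivial after the $\eps$-smoothing built into $\scb^\rho_{\cdot, \eps}$.

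The principal technical obstacle is constant tracking: the numerical factor $400$ in the theorem is the aggregate cost of (i) passing from $f$ to the nearby $g$ (factor $2$), (ii) Markov on the error fraction (another constant), and (iii) the $1/p_z$ loss from restricting to the correct label. A secondary subtlety is that the rectangle identity of the first step requires the protocol to have only public randomness; reducing to this case costs at most a negligible additive slack in information cost. Neither issue is conceptually deep, but carefully assembling the two Markov bounds together with the smoothing step so that both the $400\eps$ corruption and the $2^{-O(I)}$ measure emerge simultaneously for both labels is where the bulk of the labor lies.
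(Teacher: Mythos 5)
There is a genuine gap, and it sits exactly where your proposal dismisses a ``secondary subtlety.'' Your whole argument is built on the assumption that each transcript $\tau$ induces a combinatorial rectangle $R_\tau=A_\tau\times B_\tau$ of inputs, which is true only when the protocol is deterministic once the public coin is fixed. A protocol achieving $\IC^\rho_\eps(f)$ will in general use \emph{private} coins, and your proposed fix---``reducing to [public randomness only] costs at most a negligible additive slack in information cost''---is false. Externalizing private coins can blow up the external information cost arbitrarily: if Alice sends $M=X\oplus R_A$ with $R_A$ a private random bit, then $\I(X:M)=0$, but $\I(X:M\mid R_A)=1$; more generally $\I(XY:T\mid R\,R_A R_B)\ge \I(XY:T\mid R)$ and the gap can be as large as the communication. (Averaging arguments let you fix the \emph{public} coin cheaply, as the paper does, but there is no analogous way to fix or publish the private coins without destroying the information bound.) For a private-coin protocol a transcript is only a ``soft'' rectangle, $\Pr[T=t\mid X=x,Y=y]=q_1(x)q_2(y)$ with $q_1,q_2\in[0,1]$, and your identity $\IC^\rho(\Pi)=\sum_\tau\rho(R_\tau)\log(1/\rho(R_\tau))$ has no analogue. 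The paper's proof is organized precisely around this obstacle: it fixes the public coin, keeps the private coins, finds by a Markov argument a transcript $t$ with correct output, low conditional error, and low ``distortion'' $\dkl{\sigma_t}{\rho}\le 50\,\icost^\rho(P)$ where $\sigma_t=(\rho\mid T=t)$; then (this is where rectangularity of $\rho$ is really used) it extracts from the soft rectangle a genuine rectangle $\cA\times\cB$ that is nearly $z$-monochromatic and has $\sigma_t(\cA\times\cB)\ge 9/16$; and finally it invokes the Substate Theorem of Jain--Radhakrishnan--Sen to convert the distortion bound into $\rho(\cA\times\cB)\ge 2^{-O(\icost^\rho(P))}$. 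None of these three steps is needed in your picture, and none of them can be skipped once transcripts are not rectangles.

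A secondary problem is your treatment of the label $z$ when $p_z$ is small. Since $\scb^{z,\rho}_{\eps',\eps}$ is a \emph{maximum} over functions $g$ close to $f$, the smoothing cannot ``trivialize'' the bound: if $\rho(g^{-1}(z))$ is tiny then no large rectangle is nearly $z$-monochromatic, so $\cb^{z,\rho}_{\eps'}(g)$ is \emph{large}, and the claimed inequality $\IC^\rho_\eps(f)=\Omega(\scb)$ simply fails (take $f\equiv -1$ and $z=1$). This is why the paper's precise statement (Theorem~\ref{ap-thm:scb-ic}) carries the hypothesis $\rho(f^{-1}(z))\ge 3/20$, which also feeds into the union bound that locates a good transcript with $\out(t)=z$. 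Your conditioning-on-$\{z_\tau=z\}$ Markov argument is fine in spirit for the case $p_z=\Omega(1)$ (and mirrors the paper's Lemma~\ref{ap-lem:low-distortion}), but the small-$p_z$ case cannot be patched as you suggest; it must be excluded by such a mass hypothesis.
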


Precise definitions of the terms in the above theorem are given in
Section~\ref{ap-sec:prelim}. We note that a recent
manuscript~\cite{BravermanW11} lower bounds information complexity by
discrepancy, a result that is similar in spirit to ours. This result is
incomparable with ours, because on the one hand discrepancy is a weaker
technique than corruption, but on the other hand there is no restriction
on the input distribution.

We remark that our proof of Theorem~\ref{ap-thm:scb-ic-inf} uses only
elementary combinatorial and information theoretic arguments, and proceeds
along intuitive lines. Accordingly, we believe that it remains of independent
interest, despite the very recent claim to a stronger result by Kerenidis et
al.~\cite{Kerenidis+12}.

\subsection{Second Contribution: Information Complexity of Orthogonality and Gap-Hamming}

The {\sc approximate-orthogonality} problem is a communication problem defined
on inputs in $\b^n\times\b^n$ by the Boolean function
\[
  \ort_{b,n}(x,y) = \begin{cases}
    1 \, , & \text{if~} |\ang{x,y}| \le b\sqrt{n} \, , \\
    -1 \, , & \text{otherwise} \, .
  \end{cases}
\]
Here, $b$ is to be thought of as a constant parameter. This problem arose
naturally in Sherstov's work on the Gap-Hamming Distance
problem~\cite{Sherstov11ghd}. This latter problem is defined as follows:
\[
  \ghd_n (x,y) = \begin{cases}
    -1, & \text{if~} \ang{x,y} \le -\sqrt{n}, \\
    1, & \text{if~}  \ang{x,y} \ge \sqrt{n}.
  \end{cases}
\]
The Gap-Hamming problem has attracted plenty of attention over the last
decade, starting from its formal introduction in Indyk and
Woodruff~\cite{IndykW03} in the context of data stream lower bounds, leading
up to a recent flurry of activity that has produced three different
proofs~\cite{ChakrabartiR11,Vidick11,Sherstov11ghd} of an optimal lower bound
$\RR(\ghd_n) = \Omega(n)$. In some recent work, Woodruff and
Zhang~\cite{WoodruffZ11} identify a need for strong lower bounds on
$\IC(\ghd)$, to be used in direct sum results. We now attempt to address such
a lower bound.

At first sight, these problems appear to be ideally suited for a lower bound
via information complexity: they are quite naturally combinations of $n$
independent communication problems, each of which gives Alice and Bob a single
input bit each. One feels that the uniform input distribution ought to be hard
for them for the intuitive reason that a successful protocol cannot afford to
ignore $\omega(\sqrt{n})$ of the coordinates of $x$ and $y$, and must
therefore convey $\Omega(1)$ information per coordinate for at least
$\Omega(n)$ coordinates.  However, turning this intuition into a formal proof
is anything but simple. 

Here, we prove an optimal $\Omega(n)$ lower bound on $\IC(\ort)$
under the uniform input distribution. This is a consequence of
Theorem~\ref{ap-thm:scb-ic-inf} above, but there turns out to be a
surprising amount of work in lower bounding $\scb(\ort)$ under the
uniform distribution. Our theorem involves the tail of the standard normal
distribution, which we denote by ``tail'':
\[
  \tail(x) := \frac{1}{\sqrt{2\pi}} \int_x^\infty e^{-x^2/2} dx \, .
\]

We also reserve $\mu$ for the uniform distribution on $\b^n\times\b^n$.

\begin{theorem} \label{ap-thm:ghd-ic}
  Let $b$ be a sufficiently large constant. Then, the corruption bound
  $\cb^{1,\mu}_{\theta}(\ort_{b,n}) = \Omega(n)$, for $\theta= \tail(2.01b)$.
  Hence, by Theorem~\ref{ap-thm:scb-ic-inf}, we have
  $\IC^\mu_{\theta/400}(\ort_{b,n}) = \Omega(n)$.
\end{theorem}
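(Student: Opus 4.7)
The statement reduces to proving the corruption bound $\cb^{1,\mu}_\theta(\ort_{b,n}) = \Omega(n)$; the information complexity claim then follows immediately by Theorem~\ref{ap-thm:scb-ic-inf}, since the (non-smooth) corruption bound lower bounds the smooth one. Unpacking the corruption bound, it suffices to exhibit an absolute constant $c = c(b) > 0$ such that every rectangle $R = A \times B \subseteq \b^n \times \b^n$ of density $\mu(R) \geq 2^{-cn}$ satisfies
\[
  \Pr_{(x,y)\in A\times B}\bigl[\,|\langle x, y\rangle| > b\sqrt{n}\,\bigr] \;\geq\; \tail(2.01b)\, .
\]
In words, a sufficiently large rectangle cannot consist almost entirely of near-orthogonal pairs.

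The plan has three parts. First, control the coordinate biases of $A$ and $B$: writing $\alpha_i = \E_{x\in A}[x_i]$ and $\beta_i = \E_{y\in B}[y_i]$, the entropy bound $\log|A| \leq \sum_i H(x_i)$ combined with $H((1+\alpha)/2) \leq 1 - \Omega(\alpha^2)$ yields $\|\alpha\|_2^2, \|\beta\|_2^2 = O(cn)$, so $A$ and $B$ are ``low-bias''. Second, leverage low bias to show that under $(x,y)$ uniform on $A \times B$, the inner product $\langle x, y\rangle$ is distributed approximately as $\mathcal N(m_0, \sigma^2)$ with $|m_0| = O(\sqrt{cn})$ and $\sigma^2 = (1 \pm O(\sqrt c))\,n$. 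Third, conclude: for $b$ sufficiently large and $c$ sufficiently small, approximate Gaussianity together with the exponential gap $2\tail(b)/\tail(2.01b) = e^{\Omega(b^2)}$ gives
\[
  \Pr_R\bigl[|\langle x,y\rangle| > b\sqrt{n}\bigr] \;\geq\; 2\,\tail\!\bigl(b(1+O(\sqrt c))\bigr) - o(1) \;>\; \tail(2.01b)\, ,
\]
completing the corruption lower bound.

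The main obstacle is the Gaussian approximation in step two. The distribution on $(x,y)$ is not a product over coordinates---the constraints $x \in A$ and $y \in B$ couple the coordinates of $x$ (and of $y$) arbitrarily---so the classical Berry--Esseen theorem does not apply. My preferred route is to condition on $y$ and treat $\langle x, y\rangle = \sum_i y_i x_i$ as a linear form on $x$ uniform in $A$, then use the low-bias property of $A$ to force anti-concentration of this linear form at the Gaussian scale for typical $y \in B$. Executing this cleanly will likely require a hypercontractive or Fourier-analytic input to bound how sharply $1_A$ can correlate with any direction $y$, or alternatively a combinatorial decomposition of $A, B$ into sub-rectangles on which a pairwise-independence or moment argument applies. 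Either route demands careful tracking of approximation errors, consistent with the authors' remark about the ``surprising amount of additional technical detail'' relative to the recent GHD lower bounds.
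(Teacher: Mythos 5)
Your reduction to a single anti-concentration statement is exactly right: the theorem boils down to showing that every rectangle $A\times B$ with $\min\{|A|,|B|\}\ge 2^{n-\delta n}$ has
\[
\Pr_{(x,y)\in_R A\times B}\bigl[|\ip xy|>b\sqrt n\bigr]\ge\tail(2.01b)\,,
\]
and this is precisely Lemma~\ref{ap-lem:anti-conc} in the paper. The slide from there to the theorem (monotonicity of the event in $b$, $\scb\ge\cb$, and an appeal to Theorem~\ref{ap-thm:scb-ic}) matches the paper.

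However, your step 2 --- the Gaussian approximation --- is not a proof, and it is where the entire difficulty lives. The low-bias bound you extract in step 1 controls only the first moments $\E_{x\in A}[x_i]$ and $\E_{y\in B}[y_i]$. It says nothing about the pairwise correlations $\E_{x\in A}[x_i x_j]$, and it is these correlations that determine the conditional variance $\Var_{x\in A}\bigl[\sum_i y_i x_i\bigr]=\sum_{i,j}y_iy_j\,\mathrm{Cov}[x_i,x_j]$ for a fixed $y$. A set of density $2^{-\delta n}$ can have essentially zero coordinate bias yet carry substantial pairwise correlation, in which case the linear form along an adversarially correlated $y$ can have variance far from $n$; and even when the variance is right, the law need not be close to Gaussian. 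You correctly flag this and propose hypercontractivity or a combinatorial decomposition, but neither is carried out, so the central claim of step 2 --- that $\ip xy$ is close to $\mathcal N(m_0,\sigma^2)$ with $\sigma^2=(1\pm O(\sqrt c))n$ --- remains unsupported.

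The paper closes this gap by a genuinely different mechanism. Instead of moment control, it (i) uses Talagrand's concentration inequality to extract $k=\Theta(\sqrt\delta\,n)$ vectors $x_1,\dots,x_k\in A$ that are \emph{near-orthogonal} (Fact~\ref{ap-fact1}); (ii) replaces $B$ by its Gaussian analogue $\widetilde B=\{\tilde y:\sign(\tilde y)\in B\}$ and observes $\dd{\gamma^n|_{\widetilde B}}{\gamma^n}\le\delta n$; and (iii) invokes the chain rule for relative entropy to find a direction $x_j^*$ along which the projection of $\gamma^n|_{\widetilde B}$ is within $\sqrt\delta$ of standard Gaussian, conditionally on the lower-indexed projections (Lemma~\ref{ap-lem:gauss-like-coord}). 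The contradiction then comes from bounding an escape probability $p^*=\Pr[|Q_j|>(c+\alpha)b]$ simultaneously from below (via Pinsker and the near-Gaussianity from (iii)) and from above (via the assumed concentration of $\ip{x_j}{Y}$ plus a CLT relating $\ip{x_j}{Y}$ to $\ip{x_j}{\widetilde Y}$). The Talagrand near-orthogonality and the relative-entropy chain rule are exactly the tools that give you the Gaussian-direction guarantee your argument is missing; the coordinate-bias estimate, by itself, does not.
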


Again, precise definitions of the terms in the above theorem are given in
Section~\ref{ap-sec:prelim} and the proof of the theorem appears in
Section~\ref{ap-sec:ghd}.  As it turns out, a slight strengthening of the
parameter $\theta$ in the above theorem would give us the result
$\IC^\mu_{\theta'}(\ghd_n) = \Omega(n)$. This is because the following
result---stated somewhat imprecisely for now---connects the two problems.

\begin{theorem} \label{ap-thm:ghd-ort}
  Let $b$ be a sufficiently large constant and let $\theta=\tail(1.99b)$.
  Then, we have $\scb^{\mu}_{400\theta, \theta}(\ghd_n) =
  \Omega(\cb_{400\theta}^{1,\mu}(\ort_{b,n}))-O(\sqrt n)$.  By
  Theorem~\ref{ap-thm:scb-ic-inf}, we then have $\IC^{\mu}_
  \theta(\ghd_n)=\Omega(\cb_{400\theta}^{1,\mu}(\ort_{b,n})) -O(\sqrt n)$.
\end{theorem}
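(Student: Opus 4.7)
The plan is to prove the statement by the contrapositive, at the level of rectangles: I will convert any near-optimal smooth-corruption witness for $\ghd_n$ under $\mu$ into a corruption witness for $\ort_{b,n'}$ (with $n' = n - O(\sqrt n)$) at the same error $400\theta$, losing only $O(\sqrt n)$ bits in the log-measure of the witnessing rectangle. A smooth-corruption witness for $\ghd_n$ at parameters $(400\theta, \theta)$ consists of a rectangle $R = A \times B$ and a function $g$ with $\mu(g \ne \ghd_n) \le \theta$ such that $R$'s $g$-positive mass is substantial while its $g$-negative mass is a tiny fraction of $\mu(R)$; my goal is a corresponding $\ort_{b,n'}$ witness whose $\mu$-measure is at most $2^{O(\sqrt n)}$ times smaller.

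First I clean up the smoothing. Because $g$ agrees with $\ghd_n$ outside a set of $\mu$-measure $\theta$, and the witness conditions are slack by a factor of $400$, after a negligible correction I may assume $R$ itself has $\mu$-mass at least $399\theta$ on the true half-space $\{\ang{x, y} \ge \sqrt n\}$ (taking the positive side WLOG, by $\mu$-symmetry) and negligible mass on $\{\ang{x, y} \le -\sqrt n\}$. However, this mass may be concentrated in the ORT-negative tail $\{\ang{x, y} > b\sqrt n\}$, so $R$ is not yet an ORT witness.

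To push the inner product into the ORT-positive slab, I restrict by pinning $s = \Theta(\sqrt n)$ coordinates on each side. Fixing a coordinate set $S \subseteq [n]$ of size $s$ and strings $u, v \in \b^S$, I restrict to the sub-rectangle $R^\star = \{(x, y) \in R : x|_S = u,\; y|_S = v\}$, which shrinks $\mu$ by a factor $2^{-2s} = 2^{-O(\sqrt n)}$. After restriction, the inner product on $R^\star$ equals $\ang{x|_{[n]\setminus S}, y|_{[n]\setminus S}} + \ang{u, v}$, so choosing $(u, v)$ with $\ang{u, v}$ in a target range $[-s, s]$ shifts the distribution of $\ang{x, y}$ on $R^\star$ by any desired amount of magnitude at most $s$ before comparing to the slightly smaller threshold $b\sqrt{n-s}$. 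A rectangular local-CLT estimate—showing that the density of $\ang{x, y}$ under $\mu|_R$ is $O(1/\sqrt n)$, so any prescribed $\sqrt n$-wide window of inner products captures an $\Omega(1)$ fraction of the mass—guarantees that some shift lands the bulk of $R^\star$'s inner-product mass inside $[-b\sqrt{n-s}, b\sqrt{n-s}]$, producing a valid corruption witness for $\ort_{b, n-s}$ at error $400\theta$.

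The main obstacle I anticipate is the rectangular local-CLT / anti-concentration estimate: bounding, for any rectangle $R = A \times B$, the pointwise density of $\ang{x, y}$ under $\mu|_R$ so that a $\sqrt n$-window captures an $\Omega(1)$ fraction. This calls for a Berry-Esseen-type bound on $\sum_i (x_i - \bar x_i)(y_i - \bar y_i)$ (with $\bar x = \E_{\mu|_A}[x]$ and $\bar y = \E_{\mu|_B}[y]$), where the marginals are independent but biased $\pm 1$ variables and one must handle the degenerate case of coordinate variances near zero. Once this is in hand, the parameter bookkeeping—matching $400\theta$ on both sides, absorbing the $\theta$-smoothing into the $\theta$-mass disagreement between $g^{-1}(\pm 1)$ and $\{\pm\ang{x, y} \ge \sqrt n\}$, and tracking the $O(\sqrt n)$ loss from coordinate pinning—is routine, and Theorem~\ref{ap-thm:scb-ic-inf} then yields the information-complexity conclusion.
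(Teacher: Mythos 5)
Your proposal takes a genuinely different route from the paper's, but it has a gap that I believe is fatal. The paper's proof hinges on a single clever choice that you have missed: it spends the entire smoothing budget on an explicit function $h$ obtained by \emph{flipping} $\ghd_n$ to $-1$ on the far-positive tail $\{\ang{x,y} > b'\sqrt n\}$ (with $b' = 1.99b$), and $\Pr_\mu[h \ne \ghd_n] \le \tail(1.99b) = \theta$ gives exactly the allowed perturbation. The point of the flip is that any $\eps$-error $1$-mono rectangle for $h$ now has a \emph{two-sided} constraint on its inner products --- most of its mass satisfies $\ang{x,y} \in [-\sqrt n,\, b'\sqrt n]$, an interval of length $O(\sqrt n)$ --- and a deterministic pad ($t = O(\sqrt n)$ coordinates fixed to $+1$ on Alice's side and $-1$ on Bob's) shifts that whole interval into $[-b\sqrt{n'}, b\sqrt{n'}]$ (with $n' = n+t > n$, not $n - s$), so the padded rectangle is $\eps$-error $1$-mono for $\ort_{b,n'}$ with $\mu$-measure exactly $2^{-2t}$ times smaller. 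No local CLT, no choice of shift: the reduction is one line once $h$ has been defined.

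In your proposal you never fix $g$; instead you try to reduce to the case $g = \ghd_n$ via a ``cleanup'' step, and then handle the upper tail with coordinate pinning. Both halves of this break. First, the cleanup is unjustified: once $g$ is fixed, the $1$-mono rectangle $R$ is adversarial, and $\mu(R)$ is exponentially small --- much smaller than $\theta$ --- so the $\theta$-measure disagreement set $\{g \ne \ghd_n\}$ could contain all of $R$; you cannot infer that most of $R$'s mass lies on $\{\ang{x,y} \ge \sqrt n\}$. Second, and more importantly, even for $g = \ghd_n$ the constraint on inner products in a $1$-mono rectangle is only one-sided, so the center of $\ang{X,Y}$ under $\mu|_R$ can be $\Theta(n)$ rather than $O(\sqrt n)$: take $A = B$ to be a Hamming ball around $(1,\ldots,1)$ of size $2^{(1-\delta)n}$; then $\ang{X,Y}$ concentrates (within $O(\sqrt n)$) around $\Theta(\delta n)$, far above $b\sqrt n$ for any constant $\delta$. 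This rectangle \emph{is} a valid low-error $1$-mono rectangle for $\ghd_n$, yet pinning $O(\sqrt n)$ coordinates can shift the inner product only by $O(\sqrt n)$, which is nowhere near enough to bring it into the $\ort$ window. Your rectangular local-CLT estimate (which I agree is plausible and controls the \emph{spread}) addresses the wrong quantity; what needs to be bounded is the \emph{location} of the center, and that is precisely what the flip in the definition of $h$ enforces. Without some analogue of the flip, the reduction cannot work.
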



We note that Chakrabarti and Regev~\cite{ChakrabartiR11} state that their
lower bound technique for $\RR(\ghd_n)$ can be captured within the smooth
rectangle bound framework. While this is true in spirit, there is a
significant devil in the details, and their technique does not yield a good
lower bound on $\scb^\mu_{\eps,\delta}(\ghd_{n})$ for the {\em uniform}
distribution $\mu$. We explain more in Section~\ref{ap-sec:ghd}.

These theorems suggest a natural follow-up conjecture that we leave open.

\begin{conjecture} \label{conj:ghd-ic}
  There exists a constant $\eps$ such that $\IC^\mu_\eps(\ghd_n) = \Omega(n)$.
\end{conjecture}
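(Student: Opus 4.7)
The most natural attack goes through the chain of reductions already developed in this paper. By Theorem~\ref{ap-thm:ghd-ort}, establishing $\cb^{1,\mu}_{400\theta}(\ort_{b,n}) = \Omega(n)$ for $\theta = \tail(1.99b)$ would immediately yield $\IC^\mu_\theta(\ghd_n) = \Omega(n) - O(\sqrt n) = \Omega(n)$, settling the conjecture with $\eps := \tail(1.99b)$. What Theorem~\ref{ap-thm:ghd-ic} actually gives is the \emph{weaker} statement that the corruption bound holds only at error $\tail(2.01b)$: since the corruption bound is non-increasing in its error parameter and $400\,\tail(1.99b) \gg \tail(2.01b)$ for large $b$ (the ratio grows like $e^{\Theta(b^2)}$, as $2.01^2 - 1.99^2 = 0.08$), the existing bound does not imply the one we need. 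The entire problem thus reduces to pushing the threshold constant inside $\tail$ from $2.01$ down to some $c < 2$ in the corruption lower bound for \ort under $\mu$.

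My first step would be to open up the proof of Theorem~\ref{ap-thm:ghd-ic} and locate the exact source of the $2.01$ constant. One anticipates two contributors: (i) a Berry--Esseen-type approximation for the inner product $\ang{x,y}$ when $(x,y)$ ranges uniformly over a large rectangle $A\times B$, and (ii) additive slack from handling combinatorially unbalanced rectangles. The Gaussian estimate in (i) should already be essentially tight, so room for improvement lies in (ii): replacing a worst-case rectangle analysis by a sharper argument that exploits independence of coordinates within $A$ and within $B$---perhaps via higher-moment bounds, or a smoothing/partitioning argument that isolates ``typical'' sub-rectangles on which $\ang{x,y}$ is nearly Gaussian with the right variance. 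A parallel avenue is to revisit Theorem~\ref{ap-thm:scb-ic-inf} and try to shrink the $400$ factor in the error blow-up, which would narrow the parameter gap to be closed at the \ort level.

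The main obstacle is precisely that shaving $0.02$ here is not a cosmetic tightening of constants: the exponential dependence $\tail(1.99b)/\tail(2.01b) = e^{\Theta(b^2)}$ means any viable proof must genuinely strengthen the anticoncentration analysis of \ort, not merely rebalance its bookkeeping. If this direct route stalls, two backup strategies suggest themselves. First, one could attack the general Conjecture~\ref{ap-conj:main}: the $\Omega(n)$ lower bounds on $\RR(\ghd_n)$ from \cite{ChakrabartiR11,Vidick11,Sherstov11ghd} can be distilled into smooth corruption bounds under \emph{some} distribution, so a proof that information complexity dominates $\scb$ in full generality would upgrade these to the desired information bound---this is exactly what \cite{Kerenidis+12} claims. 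Second, one could pursue a direct information-cost lower bound for \ghd under $\mu$, adapting the correlation or round-elimination structure of the existing $\RR(\ghd_n)$ proofs to track internal information rather than transcript length; this seems technically demanding, as those proofs rely heavily on the rectangle property rather than on any intrinsically information-theoretic invariant.
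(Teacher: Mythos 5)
This statement is Conjecture~\ref{conj:ghd-ic}, which the paper explicitly \emph{leaves open}: there is no proof of it in the paper to compare against. Your write-up correctly recognizes this and instead outlines a research program, so it should be assessed as such.

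Your primary route matches the paper's own suggested strategy almost verbatim. The remark following Theorem~\ref{ap-thm:ghd-ort-precise} says that strengthening the anti-concentration constant from $2.01$ to $1.98$---i.e., proving $\cb^{1,\mu}_{\tail(1.98b)}(\ort_{b,n}) = \Omega(n)$---would yield $\IC^\mu_{\tail(1.98b)/400}(\ghd_n) = \Omega(n)$, since $\tail(1.98b)/400 > \tail(1.99b)$ for large $b$, satisfying the constraint of the reduction. Your diagnosis of why the existing Theorem~\ref{ap-thm:ghd-ic} falls short (monotonicity of $\cb$ in the error parameter plus the exponential gap $\tail(1.99b)/\tail(2.01b) = e^{\Theta(b^2)}$) is exactly right, and your observation that closing a ``$0.02$'' gap is not cosmetic but requires genuinely tightening the anti-concentration argument of Lemma~\ref{ap-lem:anti-conc} is the correct assessment of the bottleneck. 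One small inaccuracy: the relevant constant $2.01$ originates in Lemma~\ref{ap-lem:anti-conc} as an artifact of the half-normal decomposition (with $c+2.01\sigma < 2.01$ where $c = \sqrt{2/\pi}$ and $\sigma = \sqrt{1-2/\pi}$), not from a Berry--Esseen error term or from handling unbalanced rectangles; the slack lives in the CLT-plus-escape-probability bookkeeping of Section~\ref{ap-sec:anti-conc}, which is where any improvement would have to be made.

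Your first backup strategy has a genuine gap, and the paper explicitly flags it. You propose that a proof of Conjecture~\ref{ap-conj:main} (information complexity dominates smooth corruption under arbitrary distributions, as claimed by \cite{Kerenidis+12}) combined with the smooth-corruption lower bounds implicit in \cite{ChakrabartiR11,Vidick11,Sherstov11ghd} would resolve the conjecture. But Section~\ref{subsec:cnt} of the paper states directly: ``even granting Conjecture~\ref{ap-conj:main}~\ldots, this line of reasoning will only lower bound $\IC^\lambda(\ghd)$ for an artificial distribution $\lambda$, and will not lower bound $\IC^\mu(\ghd)$.'' The issue is that the Chakrabarti--Regev corruption-with-jokers argument passes through the linear-programming variant $srec$ of the smooth rectangle bound, and translating it to the ``natural'' variant $\widetilde{srec}$ used here produces a non-rectangular, non-uniform hard distribution $\lambda$. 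Since information complexity under one distribution gives no control over information complexity under another, a lower bound on $\IC^\lambda(\ghd)$ for such a $\lambda$ does not imply the conjectured $\Omega(n)$ bound on $\IC^\mu(\ghd)$ under the \emph{uniform} distribution. So your backup route, as stated, would prove something weaker than Conjecture~\ref{conj:ghd-ic}. Your second backup (a direct information-cost analysis of \ghd under $\mu$) remains a legitimate, if vague, alternative.
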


\subsection{Direct Sum}

A direct sum theorem states that solving $m$ independent instances of a
problem requires about $m$ times the resources that solving a single instance
does. It could apply to a number of models of computation, with ``resources''
interpreted appropriately. For our model of two-party communication, it works
as follows.  For a function $f:\cX\times\cY\to\b$, let
$f^m:\cX^m\times\cY^m\to\b^m$ denote the function given by
\[
  f^m(x_1,\ldots, x_m, y_1,\ldots, y_m) 
  ~=~ (f(x_1,y_1), \ldots, f(x_m,y_m)) \, .
\]
Notice that $f^m$ is not a Boolean function. We will define $\RR(f^m)$
to be the randomized communication complexity of the task of outputting
a vector $(z_1,\ldots,z_m)$ such that for each $i\in [m]$, we have
$f(x_i,y_i) = z_i$ with high probability.  Then, a direct sum theorem for
randomized communication complexity would say that $\RR(f^m) =
\Omega(m\cdot\RR(f))$. Whether or not such a theorem holds for a general
$f$ is a major open question in the field. 

Information complexity, by its very design, provides a natural approach
towards proving a direct sum theorem. Indeed, this was the original
motivation of Chakrabarti et al.~\cite{ChakrabartiSWY01} in introducing
information complexity; they proved a direct sum theorem for randomized
{\em simultaneous-message} and {\em one-way} complexity, for functions
$f$ satisfying a certain ``robustness'' condition. Still using
information complexity, Jain et al.~\cite{JainRS03icalp} proved a direct
sum theorem for bounded-round randomized complexity, when $f$ is hard
under a product distribution. Recently, Barak et al.~\cite{BarakBCR10}
used information complexity, together with a {\em protocol compression}
approach, to mount the strongest attack yet on the direct sum question
for $\RR(f)$, for fairly general $f$: they show that $\RR(f^m) \approx
\Omega(\sqrt{m}\cdot\RR(f))$, where the ``$\approx$'' ignores
logarithmic factors.

One consequence of our work here is a simple proof of a direct sum
theorem for randomized communication complexity for functions whose
hardness is captured by a smooth corruption bound (which in turn
subsumes corruption, discrepancy and smooth discrepancy~\cite{JainK10})
under a rectangular distribution. This includes the well-studied
\textsc{inner-product} function, and thanks to our
Theorem~\ref{ap-thm:ghd-ic}, it also includes $\ort$.  Should
Conjecture~\ref{ap-conj:main} be shown to hold, we could remove the
rectangularity constraint altogether and capture additional important
functions such as \textsc{disjointness}, whose hardness seems to be
captured only by considering corruption under a non-rectangular
distribution. 

We note that the protocol compression approach~\cite{BarakBCR10} gives a
strong direct sum result for distributional complexity under rectangular
distributions, but still not as strong as ours because their result
contains a not-quite-benign polylogarithmic factor. We say more about
this in Section~\ref{ap-sec:ghd}.

\paragraph{Comparison with Direct Product.}
Other authors have considered a related, yet different, concept of
direct {\em product} theorems. A strong direct product theorem (henceforth,
SDPT) says that computing $f^m$ with a correctness probability as small
as $2^{-\Omega(m)}$---but more than the trivial guessing
bound---requires $\Omega(m\,\RR(f))$ communication, where
``correctness'' means getting {\em all} $m$ coordinates of the output
right. It is known that SDPTs do not hold in all
situations~\cite{Shaltiel03}, but do hold for (generalized)
discrepancy~\cite{LeeSS08,Sherstov11dirprod}, an especially important
technique in lower bounding quantum communication.  A recent manuscript
offers an SDPT for bounded-round randomized
communication~\cite{JainPY12}.

Although strong direct product theorems appear stronger than direct sum
theorems,\footnote{Some authors interpret ``direct sum'' as requiring
correctness of the entire $m$-tuple output with high probability. Under
this interpretation, direct product theorems indeed subsume direct sum
theorems. Our definition of direct sum is arguably more natural, because
under our definition, we at least have $\RR(f^m) = O(m\,\RR(f))$
always.} they are in fact incomparable. A protocol could conceivably
achieve low error on each coordinate of
$f^m(x_1,\ldots,x_m,y_1,\ldots,y_m)$ while also having zero probability
of getting the entire $m$-tuple right.

\section{Preliminaries} \label{ap-sec:prelim}

Consider a function $f:\cX\times\cY\to\cZ$, where $\cX,\cY,\cZ$ are
nonempty finite sets. Although we will develop some initial theory under
this general setting, it will be useful to keep in mind the important
special case $\cX = \cY = \b^n$ and $\cZ = \b$. We can interpret such a
function $f$ as a {\em communication problem} wherein Alice receives an
input $x\in\cX$, Bob receives an input $y\in\cY$, and the players must
communicate according to a {\em protocol} $P$ to come up with a value
$z\in\cZ$ that is hopefully equal to $f(x,y)$. The sequence of messages
exchanged by the players when executing $P$ on input $(x,y)$ is called
the {\em transcript} of $P$ on that input, and denoted $P(x,y)$. We
require that the transcript be a sequence of bits, and end with (a
binary encoding of) the agreed-upon output. We denote the output
corresponding to a transcript $t$ by $\out(t)$: thus, the output of $P$
on input $(x,y)$ is $\out(P(x,y))$.

Our protocols will, in general, be randomized protocols with a public
coin as well as a private coin for each player. When we disallow the
public coin, we will explicitly state that the protocol is private-coin.
Notice that $P(x,y)$ is a random string, even for a fixed input $(x,y)$.
For a real quantity $\eps \ge 0$, we say that $P$ computes $f$ with
$\eps$ error if $\Pr[\out(P(x,y)) \ne f(x,y)] \le \eps$, the probability
being with respect to the randomness used by $P$ and the input distribution. 
We define the cost of $P$ to be the worst case length of its transcript, 
$\max |P(x,y)|$, where we maximize over all inputs $(x,y)$ and over all 
possible outcomes of the coin tosses in $P$. Finally, the $\eps$-error 
randomized communication complexity of $f$ is defined by
\[
  \RR_\eps(f) = \min\{\cost(P):\, P~\text{computes}~f~\text{with
  error}~\eps\} \, .
\]
In case $\cZ = \b$, we also put $\RR(f) = \RR_{1/3}(f)$.

For random variables $A,B,C$, we use notations of the form $\h(A)$,
$\h(A\mid C)$, $\h(AB)$, $\I(A:B)$, and $\I(A:B\mid C)$ to denote
entropy, conditional entropy, joint entropy, mutual information, and
conditional mutual information respectively. For discrete probability
distributions $\lambda,\mu$, we use $\dkl{\lambda}{\mu}$ to denote the
relative entropy (a.k.a., informational divergence or Kullback-Leibler
divergence) from $\lambda$ to $\mu$ using logarithms to the base $2$.
These standard information theoretic concepts are well described in a
number of textbooks, e.g., Cover and Thomas~\cite{CoverThomas-book}.

Let $\lambda$ be an input distribution for $f$, i.e., a probability
distribution on $\cX\times\cY$. We say that $\lambda$ is a {\em
rectangular distribution} if we can write it as a tensor product
$\lambda = \lambda_1 \otimes \lambda_2$, where $\lambda_1,\lambda_2$ are
distributions on $\cX,\cY$ respectively. Now consider a general
$\lambda$ and let $(X,Y) \sim \lambda$ be a random input for $f$ drawn
from this joint distribution. We define the $\lambda$-information-cost
of the protocol $P$ to be $\icost^\lambda(P) = \I(XY:P(X,Y) \mid R)$,
where $R$ denotes the public randomness used by $P$. This cost measure
gives us a different complexity measure called the $\eps$-error {\em
information complexity} of $f$, under $\lambda$:
\[
  \IC^\lambda_\eps(f) = \min\{\icost^\lambda(P):\,
  P~\text{computes}~f~\text{with error}~\eps\} \, .
\]
We note that in the terminology of Barak et al.~\cite{BarakBCR10}, the
above quantity would be called the {\em external} information
complexity, as opposed to the {\em internal} one, which is based on the
cost function $\I(X:P(X,Y),R \mid Y) + \I(Y:P(X,Y),R \mid X)$. As noted
by them, the two cost measures coincide under a rectangular input
distribution. Since our work only concerns rectangular distributions,
this internal/external distinction is not important to us.

It is easy to see (and by now well-known) that information complexity
under {\em any} input distribution lower bounds randomized communication
complexity.

\begin{fact} \label{ap-fact:ic-comm}
  For every input distribution $\lambda$ and error $\eps$, we have 
  $\RR_\eps(f) \ge \IC^\lambda_\eps(f)$.
\end{fact}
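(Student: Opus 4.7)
The plan is to start with any randomized protocol $P$ that witnesses $\RR_\eps(f)$, i.e., $\cost(P) = \RR_\eps(f)$ and $\Pr[\out(P(x,y)) \ne f(x,y)] \le \eps$ for every fixed input $(x,y)$. Since this worst-case error bound holds pointwise, it also holds when $(X,Y)$ is drawn from any distribution $\lambda$, so $P$ is admissible in the minimization defining $\IC^\lambda_\eps(f)$. Therefore it suffices to show the per-protocol inequality $\icost^\lambda(P) \le \cost(P)$, which then gives $\IC^\lambda_\eps(f) \le \icost^\lambda(P) \le \cost(P) = \RR_\eps(f)$.

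To bound $\icost^\lambda(P) = \I(XY : P(X,Y) \mid R)$, I would use the standard chain $\I(XY : P(X,Y) \mid R) \le \h(P(X,Y) \mid R)$ and then exploit the structural fact that, conditioned on any fixing $R = r$ of the public randomness, the transcript $P(X,Y)$ takes values in a prefix-free set of binary strings whose lengths are all bounded by $\cost(P)$. For such a prefix-free code, $\h(P(X,Y) \mid R = r) \le \E[|P(X,Y)| \mid R = r] \le \cost(P)$; averaging over $r$ yields $\h(P(X,Y) \mid R) \le \cost(P)$, and combining with the mutual-information inequality gives $\icost^\lambda(P) \le \cost(P)$.

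There is essentially no serious obstacle here; the only point of care is making sure the prefix-free structure is being used correctly to convert the maximum transcript length into an entropy bound (as opposed to needing an expected-length bound) and remembering that the internal/external distinction does not affect the inequality since, for rectangular $\lambda$ they agree and for general $\lambda$ the external quantity $\I(XY : P(X,Y) \mid R)$ used in the definition is anyway at most $\h(P(X,Y) \mid R)$. The bound is tight only up to the usual slack between entropy and worst-case length, which suffices for the stated inequality.
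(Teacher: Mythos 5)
Your argument is correct and matches the paper's approach: the paper proves this fact via the one-line chain $\I(XY:P(X,Y) \mid R) \le \h(P(X,Y)) \le |P(X,Y)|$, i.e., information cost is at most transcript entropy, which is at most worst-case transcript length. You simply unpack the details the paper leaves implicit (conditioning on the public coin $R$, prefix-freeness of the transcript set, and averaging over $r$); the side remark on internal versus external information is unnecessary here since the paper's definition of $\icost$ is the external one, but it does no harm.
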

\begin{proof}
  Simply observe that $\I(XY:P(X,Y) \mid R) \le \h(P(X,Y)) \le
  |P(X,Y)|$.
\end{proof}

\subsection{Corruption and Smooth Corruption}

We consider a communication problem given by a partial function,
$f:\cX\times\cY\to\cZ\cup\{*\}$. We say that the function $f$ is undefined on
an input $(x,y)\in\cX\times\cY$ iff $f(x,y)=*$. For such inputs we say that 
a protocol $P$ computes $f$ correctly on $(x,y)$ always, irrespective of 
what $P$ outputs. Therefore, we say that a protocol $P$ computes $f$ with 
error $\eps\ge0$ if $\Pr[f(x,y)\not=*\ \wedge\ out(P(x,y))\not=f(x,y)] \le \eps$
where, as before, the probability being with respect to the randomness used by $P$ and the 
input distribution.

Pick a particular $z\in\cZ$. A set
$S\ceq\cX\times\cY$ is said to be {\em rectangular} if we have $S = S_1
\times S_2$, where $S_1\ceq\cX,S_2\ceq\cY$. Following Beame et
al.~\cite{BeamePSW06}, we say that $S$ is $\eps$-error $z$-monochromatic
for $f$ under $\lambda$ if $\lambda(S\setminus (f^{-1}(z)\cup f^{-1}(*))) \le
\eps\,\lambda(S)$. We then define
\begin{align}
  \epsmono^{z,\lambda}(f) &= \max\{\lambda(S):\, S~\mbox{is rectangular
    and $\eps$-error $z$-monochromatic}\} \, , \label{ap-def:eps-mono} \\
  \cb^{z,\lambda}_\eps(f) &= -\log (\epsmono^{z,\lambda}(f)) \, , 
    \label{ap-def:cb} \\
  \scb^{z,\lambda}_{\eps,\delta}(f) &= \max\{ \cb^{z,\lambda}_\eps(g):\,
    g\in(\cZ\cup \{*\})^{\cX\times\cY}, \Pr_{(X,Y)\sim\lambda}[f(X,Y) \ne g(X,Y)] 
      \le \delta \} \, .  \label{ap-def:scb}
\end{align}
The quantities $\cb^{z,\lambda}_\eps(f)$ and
$\scb^{z,\lambda}_{\eps,\delta}(f)$ are called the corruption bound and
the smooth corruption bound respectively, under the indicated choice of
parameters. In the latter quantity, we refer to $\eps$ as the {\em error
parameter} and $\delta$ as the {\em perturbation parameter}. One can go
on to define bounds independent of $z$ and $\lambda$ by appropriately
maximizing over these two parameters, but we shall not do that here.

We note that Jain and Klauck~\cite{JainK10} use somewhat different
notation: what we have called $\scb$ above is the logarithm of (a slight
variant of) the quantity that they call the ``natural definition of the
smooth rectangle bound'' and denote $\widetilde{srec}$.

What justifies calling these quantities ``bounds'' is that they can be
shown to lower bound $\RR_{\eps'}(f)$ for sufficiently small
$\delta,\eps,\eps'$, under a mild condition on $\lambda$.  It is clear that
$\scb^{z,\lambda}_{\eps,\delta}(f) \ge \cb^{z,\lambda}_\eps(f)$, so we
mention only the stronger result, that involves the smooth corruption
bound.

\begin{fact}[Jain and Klauck~\cite{JainK10}] \label{ap-fact:scb}
  Let $f:\cX\times\cY\to\cZ\cup\{*\}$, $z\in\cZ$ and distribution $\lambda$ on
  $\cX\times\cY$ be such that $\lambda(f^{-1}(z)) \ge 1/3$. Then there
  is an absolute constant $c > 0$ such that, for a sufficiently small
  constant $\eps$, we have $\RR_\eps(f) \ge
  c\cdot\scb^{z,\lambda}_{2\eps,\eps/2}(f)$. \qed
\end{fact}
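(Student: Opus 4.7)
The plan is to upper bound $\scb^{z,\lambda}_{2\eps,\eps/2}(f)$ by $O(\RR_\eps(f))$. Since the smooth corruption bound is a maximum over all $(\eps/2)$-perturbations $g$ of $f$, I fix an arbitrary such $g$ and aim to exhibit a single rectangle that is $2\eps$-error $z$-monochromatic for $g$ and has $\lambda$-mass $2^{-O(\RR_\eps(f))}$. This will give $\cb^{z,\lambda}_{2\eps}(g) = O(\RR_\eps(f))$ uniformly in $g$, and the theorem follows.

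Starting from an optimal $\eps$-error randomized protocol $P$ for $f$ of cost $C = \RR_\eps(f)$, I first amplify it in the standard way---running $O(1)$ independent copies and taking majority---to obtain a protocol $P'$ of cost $O(C)$ whose worst-case error on $f$ is at most some small constant $\eps_0$ to be fixed below. A short calculation shows that $P'$ then has error at most $\eps_0 + \eps/2$ on $g$ under \emph{any} distribution, because $g$ disagrees with $f$ on $\lambda$-mass at most $\eps/2$. Applying the easy direction of Yao's minimax lemma, I extract a deterministic protocol $D$ of cost $O(C)$ whose distributional error on $g$ under $\lambda$ is at most $\eps_0 + \eps/2$. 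The transcripts of $D$ induce a partition of $\cX\times\cY$ into at most $2^{O(C)}$ rectangles, each labelled by the output that $D$ produces on it.

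Call a label-$z$ rectangle $R$ \emph{bad} if $\lambda(R\setminus(g^{-1}(z)\cup g^{-1}(*))) > 2\eps\cdot\lambda(R)$, and \emph{good} otherwise. The total $\lambda$-mass of label-$z$ rectangles is at least $\lambda(g^{-1}(z))$ minus the overall error of $D$, and therefore at least $\lambda(f^{-1}(z)) - \eps/2 - (\eps_0 + \eps/2) \ge 1/3 - \eps - \eps_0$, using the hypothesis $\lambda(f^{-1}(z))\ge 1/3$. Meanwhile, each bad rectangle $R$ contributes more than $2\eps\cdot\lambda(R)$ to the overall error of $D$, so a Markov-style estimate bounds the combined mass of bad label-$z$ rectangles by $(\eps_0 + \eps/2)/(2\eps)$. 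Choosing $\eps_0$ to be a small enough constant multiple of $\eps$ and then $\eps$ to be a small enough absolute constant makes these two bounds combine to give good-$z$ mass $\Omega(1)$. Since there are at most $2^{O(C)}$ such rectangles, at least one of them has $\lambda$-mass $2^{-O(C)}$; by construction it is $2\eps$-error $z$-monochromatic for $g$.

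The only delicate part is the parameter balancing in the final paragraph: one needs $\eps_0$ small enough (a fixed fraction of $\eps$) that the Markov bound on the mass of bad rectangles does not consume all the $z$-mass furnished by the hypothesis. The slack for this arises exactly from the assumption $\lambda(f^{-1}(z))\ge 1/3$---the ``mild condition on $\lambda$'' cited in the statement---dominating the linear-in-$\eps$ losses. Everything else reduces to a standard pigeonhole on the transcript partition of a $\lambda$-averaged deterministic protocol.
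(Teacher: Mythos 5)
Your proof is correct and follows essentially the same route the paper sketches (and attributes to Jain--Klauck): pass to a deterministic protocol via the easy direction of Yao's minimax lemma, then run a Markov/pigeonhole estimate over the label-$z$ rectangles of its transcript partition. The one ingredient you add explicitly---a constant-factor amplification so that the deterministic error $\eps_0 + \eps/2$ is dominated by the $2\eps$ corruption threshold and by the $1/3$ slack in $\lambda(f^{-1}(z))$---is exactly the parameter balancing the paper leaves implicit, and your accounting of it is sound.
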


The constant $1/3$ above is arbitrary and can be parametrized, but we
avoid doing this to keep things simple. The proof of the above fact is
along the expected lines: an application of (the easy direction of)
Yao's minimax lemma, followed by a straightforward estimation argument
applied to the rectangles of the resulting deterministic protocol. Note
that we never have to involve the linear-programming-based smooth
rectangle bound as defined by Jain and Klauck.

\section{Information Complexity versus Corruption} \label{ap-sec:rect}

We are now in a position to tackle our first theorem.

\begin{theorem}[Precise restatement of Theorem~\ref{ap-thm:scb-ic-inf}] \label{ap-thm:scb-ic}
\label{ap-thm:IC-corr-rel}
  Suppose we have a function $f: \cX \times \cY \to \cZ\cup\{*\}$, a rectangular
  distribution $\rho$ on $\cX \times \cY$, and $z\in\cZ$ satisfying
  $\rho(f^{-1}(z)) \ge 3/20$. Let $\eps, \eps'$ be reals with $0 \le 384\eps
  \le \eps' < 1/4$. Then
  \[ 
    \IC^\rho_\eps(f) \ge \frac1{400} \scb^{z,\rho}_{\eps',\eps}(f) - \frac1{50}
    = \Omega\big( \scb^{z,\rho}_{\eps',\eps}(f) \big) - O(1) \, .
  \]
\end{theorem}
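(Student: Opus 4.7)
The plan is to fix an arbitrary $g\colon\cX\times\cY\to\cZ\cup\{*\}$ with $\Pr_{(X,Y)\sim\rho}[g(X,Y)\ne f(X,Y)]\le\eps$, and to establish $\cb^{z,\rho}_{\eps'}(g)\le 400\,\IC^\rho_\eps(f)$. Maximizing over such $g$ gives $\scb^{z,\rho}_{\eps',\eps}(f)\le 400\,\IC^\rho_\eps(f)$, which is the theorem (the additive $-1/50$ absorbs constant-order slack from the counting below). Let $P$ be a protocol that attains $I:=\IC^\rho_\eps(f)$ with error $\eps$ relative to $f$; by the triangle inequality $P$ computes $g$ with error at most $2\eps$.

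The next step is to extract a useful deterministic protocol $\hat P$ from $P$. Using the rectangular structure of $\rho$ together with the cut-and-paste property for transcripts (which in particular makes internal and external information costs agree), one can fix $P$'s randomness to obtain a deterministic $\hat P$ whose transcript $T$ satisfies $\h(T)\le 4I$ and whose $g$-error is at most $8\eps$. This is a standard Markov-plus-union-bound argument (factor $4$ on each), but some care is needed in how public and private coins are handled so that the effective transcript entropy remains controlled by the external information cost $I$.

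Because $\hat P$ is deterministic, its transcripts partition $\cX\times\cY$ into combinatorial rectangles $R_t=\hat P^{-1}(t)$ with $\rho(R_t)=\Pr[T=t]$, and the $\rho$-measure error rate on $R_t$ equals the conditional $g$-error given $T=t$. I then bound three bad events. First, $\Pr[\out(T)\ne z]\le 17/20+9\eps$: from $\rho(g^{-1}(z))\ge 3/20-\eps$ and the $8\eps$ $g$-error of $\hat P$. Second, among output-$z$ transcripts, the probability of $g$-error rate exceeding $\eps'$ on the rectangle is at most $8\eps/\eps'\le 1/48$, by Markov and the hypothesis $\eps'\ge 384\eps$. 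Third, the probability that a transcript is ``atypical''---$\Pr[T=t]<2^{-400I}$---is at most $4I/(400I)=1/100$ by the entropy tail bound $\sum_{t\colon p_t<2^{-c}}p_t\le\h(T)/c$. The three bounds sum to strictly less than $1$ for $\eps$ in the range allowed by the theorem, so some transcript $t^*$ is simultaneously output-$z$, low-error on $R_{t^*}$, and typical. Its rectangle then has $\rho(R_{t^*})\ge 2^{-400I}$ and is $\eps'$-error $z$-monochromatic for $g$, giving $\cb^{z,\rho}_{\eps'}(g)\le 400\,I$.

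The main obstacle lies in the reduction in the second paragraph. Naively fixing all private randomness can inflate the transcript entropy well beyond the external information cost $I=\I(XY:T\mid R_{\mathrm{pub}})$, so one must carefully exploit the rectangular structure of $\rho$---in particular, the cut-and-paste identity equating internal and external information costs---to carry out the randomness-fixing step without asymptotic loss. Once that reduction is in place, the remaining counting and tail-bound arguments proceed routinely.
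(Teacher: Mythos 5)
Your plan diverges from the paper's at the crucial step, and that step contains a genuine gap. You propose to fix \emph{all} of $P$'s randomness to obtain a deterministic protocol $\hat P$ whose transcript entropy satisfies $\h(T)\le 4I$, where $I=\IC^\rho_\eps(f)$. This cannot be done, rectangular $\rho$ or not. Consider the protocol in which Alice simply sends $X\oplus R$ for a private uniform random string $R$: the external information cost is $0$ (the message is uniform and independent of $(X,Y)$), yet for any fixing $R=r$ the deterministic transcript $X\oplus r$ has entropy $\h(X)$, which may be as large as $n$. Rectangularity of $\rho$ does nothing here, and the cut-and-paste/rectangle property, while true, does not constrain transcript entropy. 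The equality of internal and external information cost under rectangular $\rho$ is also orthogonal to this difficulty. You flag this as ``the main obstacle'' and appeal to ``carefully exploiting the rectangular structure,'' but no such argument exists: you are effectively asking for a lossless protocol compression from information cost to communication cost, which is precisely the open direct-sum problem this theorem is trying to circumvent. Once $\h(T)\le 4I$ fails, the entropy-tail bound that produces a rectangle of $\rho$-mass at least $2^{-400I}$ has nothing to stand on.

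The paper avoids this by never making the protocol deterministic. It fixes only the \emph{public} randomness (a benign averaging step that at most doubles the information cost and the error), then works directly with the private-coin protocol $P$. The relevant quantity is not $\h(T)$ but the \emph{distortion} $\dkl{\sigma_t}{\rho}$ of a transcript $t$, where $\sigma_t=(\rho\mid T=t)$; the identity $\icost^\rho(P)=\E_{T}\!\left[\dkl{\sigma_T}{\rho}\right]$ lets Markov pick a low-distortion, low-error, output-$z$ transcript $t$ (Lemma~\ref{ap-lem:low-distortion}). Inside this single transcript, the private-coin rectangle property $\Pr[T=t\mid X=x,Y=y]=q_1(x)q_2(y)$, combined with rectangularity of $\rho$, yields a rectangle $L$ that is large under $\sigma_t$ and $\eps'$-error $z$-monochromatic (Lemma~\ref{lem:app-main}); this is where rectangularity is genuinely used. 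Finally, the Substate Theorem converts $\sigma_t(L)\ge 9/16$ and $\dkl{\sigma_t}{\rho}=O(I)$ into $\rho(L)\ge 2^{-O(I)}$, which contradicts the corruption bound. Your second half (three bad events, Markov, union bound, a typical transcript exists) closely parallels the paper's Lemma~\ref{ap-lem:low-distortion}, but the ``typical transcript'' criterion must be low \emph{distortion}, not low self-information, and the final step needs the Substate Theorem rather than the immediate equality $\rho(R_t)=\Pr[T=t]$ that only deterministic protocols enjoy.
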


To prove this, we first consider a notion that we call the {\em distortion} of
a transcript of a communication protocol. Let $\rho$ be an input
distribution for a communication problem, let $P$ be a protocol for the
problem, and let $t$ be a transcript of $P$. We define $\sigma_t =
\sigma_t(\rho)$ to be the distribution $(\rho \mid P(X,Y) = t)$. We think
of the relative entropy $\dkl{\sigma_t}{\rho}$ as a distortion measure for
$t$: intuitively, if $t$ conveys little information about the inputs, then
this distortion should be low. The following lemma makes this intuition
precise. Notice that it does not assume that $\rho$ is rectangular.

For the remainder of this section, to keep the notation simple while handling 
partial functions, we write $g(x,y)\not=z$ to actually denote the event 
$g(x,y)\not= z \wedge g(x,y)\not= *$ for $z\in\cZ$, unless specified otherwise.

\begin{lemma} \label{ap-lem:low-distortion}
  Let $P$ be a private-coin protocol that computes $g:\cX\times\cY\to\cZ\cup\{*\}$ with
  error $\eps < 1/500$. Let $z \in \cZ$ and let $\rho$ be an arbitrary
  distribution on $\cX \times \cY$ with $\rho(g^{-1}(z)) \ge 3/20 - 1/500$.  Then, there
  exists a (``low-distortion'') transcript $t$ of $P$ such that
  \begin{align}
    \out(t) &= z \, , \label{ap-cond:tran-output} \\
    \dkl{\sigma_t}{\rho} &\le 50 \icost^\rho(P) \, , \text{~~and} \label{ap-cond:low-divergence} \\
    \Pr[g(X,Y) \ne z \mid T=t] &\le 8\eps \, , \label{ap-cond:low-error}
  \end{align}
  where $(X,Y)\sim\rho$ and $T = P(X,Y)$.
\end{lemma}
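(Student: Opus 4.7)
The plan is to show that each of conditions \eqref{ap-cond:tran-output}--\eqref{ap-cond:low-error} fails only on a small portion of the $T$-probability space, where $T := P(X,Y)$, and then conclude via a union bound that a transcript satisfying all three conditions must exist. That is, I would define a ``bad'' set associated with the failure of each condition and verify that, under the $T$-distribution, these three sets leave a positive-measure ``good'' slice behind.

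For \eqref{ap-cond:low-divergence}, I would invoke the standard identity $\I(XY : T) = \E_{t \sim T}\bigl[\dkl{\sigma_t}{\rho}\bigr]$, which, since $P$ is private-coin, equals $\icost^\rho(P)$. Markov's inequality applied to the non-negative random variable $\dkl{\sigma_T}{\rho}$ then bounds the failure probability by $1/50$. For \eqref{ap-cond:low-error}, the observation is that for any $t$ with $\out(t) = z$ the conditional error event (under the stated convention) coincides with $\{g(X,Y) \ne z\}$, so writing $\eps \ge \Pr[\text{error}]$ as a sum over transcripts restricted to those with $\out(t) = z$ and applying Markov bounds the joint probability of $\out(t) = z$ and $\Pr[g(X,Y) \ne z \mid T=t] > 8\eps$ by $\eps/(8\eps) = 1/8$. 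For \eqref{ap-cond:tran-output}, I would simply lower bound $\Pr[\out(T) = z] \ge \rho(g^{-1}(z)) - \Pr[\text{error}] \ge (3/20 - 1/500) - \eps$, which is at least $3/20 - 2/500$ because $\eps < 1/500$.

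Putting these three estimates together and being careful that the ``$\out(t)=z$'' constraint appears both in the goal and inside the conditional-error bound, the $T$-probability of a transcript satisfying all three conditions is at least $(3/20 - 2/500) - 1/50 - 1/8 > 0$, so such a $t$ exists. The argument is entirely elementary---three Markov-type estimates and one union bound---so I do not foresee any conceptual obstacle. The only delicate point is the numerical calibration: the three probability budgets barely fit inside the mass $3/20 - 2/500$ of transcripts with $\out(t) = z$, which is presumably why the constants $3/20$, $1/500$, $50$, and $8$ in the lemma statement have been tuned so precisely.
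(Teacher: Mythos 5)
Your proposal is correct and follows essentially the same route as the paper: the identity $\icost^\rho(P)=\E_{t}[\dkl{\sigma_t}{\rho}]$, Markov's inequality for the divergence and for the conditional error (restricted to transcripts with $\out(t)=z$, where the error event coincides with $\{g(X,Y)\ne z\}$), the bound $\Pr[\out(T)=z]\ge \rho(g^{-1}(z))-\eps$, and a union bound whose slack is exactly the tight numerical margin you computed. If anything, your treatment of condition~\eqref{ap-cond:low-error} is slightly more explicit than the paper's terse ``by Markov's inequality,'' but the argument is the same.
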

\begin{proof}
  Let $\tau$ denote the distribution on transcripts given by $P(X,Y)$.
  By basic results in information theory~\cite{CoverThomas-book}, we
  have
  \[
    \icost^\rho(P) = \I(XY:T) = \E_{T\sim\tau}\left[
      \dkl{\sigma_T}{\rho}\right] \, .
  \]
  Consider a random choice of $t$ according to $\tau$. By Markov's inequality,
  conditions~\eqref{ap-cond:low-divergence} and~\eqref{ap-cond:low-error} fail with
  probability at most $1/50$ and $1/8$ respectively.  By the lower bound on $\rho(g^{-1}(z))$,
  condition~\eqref{ap-cond:tran-output} fails with probability at most
  $17/20 + 1/500+ \eps$. Since $\eps \le 1/500$, and $1/8 + 1/50 + 17/20 + 1/500 + 1/500 <
  1$, it follows that there exists a choice of $t$ satisfying all three
  conditions.
\end{proof}

Property~\ref{ap-cond:low-error} in the above lemma should be interpreted as a
low-error guarantee for the transcript $t$. We now argue that the existence of
such a transcript implies the existence of a ``large'' low-corruption
rectangle, provided the input distribution $\rho$ is rectangular: this is the
only point in the proof that uses rectangularity. One has to be careful with
the interpretation of ``large'' here: it means large under $\sigma_t$, and not
$\rho$. However, later on we will add in the low-distortion guarantee of
Lemma~\ref{ap-lem:low-distortion} to conclude largeness under $\rho$ as well.

\begin{lemma}
  Let $t$ be a transcript of a private-coin protocol $P$ for
  $g:\cX\times\cY\to\cZ\cup\{*\}$. Let $\rho$ be a rectangular distribution on
  $\cX\times\cY$, $z\in\cZ$, $(X,Y)\sim\rho$, $T = P(X,Y)$, and $\eps \ge
  0$. Suppose
  \begin{equation} \label{ap-cond:low-err-rho-app}
    \Pr[g(X,Y) \ne z \mid T=t] \leq \eps \, ,
  \end{equation}
  then there exists a rectangle $L\ceq\cX\times\cY$ such that 
  \begin{align}
    \sigma_t(L) &\geq 9/16 \, , ~\text{and} \label{ap-cond:large-under-rho-app}\\
    \Pr[g(X,Y) \ne z \mid (X,Y) \in L] &\leq 16\eps \, .  \label{ap-cond:low-err-mu-app}
  \end{align}
\label{lem:app-main}
\end{lemma}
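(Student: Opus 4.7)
The proof begins with the observation that $\sigma_t$ inherits a rectangular structure from $\rho$ whenever $P$ is private-coin: the conditional probability $\Pr[T=t\mid X=x, Y=y]$ factors as $u(x)v(y)$, where $u(x), v(y) \in [0,1]$ arise from the independence of Alice's and Bob's private coins. Hence $\sigma_t(x,y) = \rho(x,y)\,u(x)v(y)/(\bar u\bar v)$ and $\sigma_t = \sigma_{t,1}\otimes\sigma_{t,2}$. Introduce the marginal density ratios $r_i := \sigma_{t,i}/\rho_i$, which satisfy $\E_{\rho_i}[r_i]=1$.

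For the rectangle, take $L_i := \{z : r_i(z) \ge 1/4\}$. A Markov-type bound gives $\sigma_{t,i}(L_i^c) = \sum_{r_i(z) < 1/4} r_i(z)\rho_i(z) \le (1/4)\rho_i(L_i^c) \le 1/4$, hence $\sigma_{t,i}(L_i) \ge 3/4$; by rectangular multiplication $\sigma_t(L) \ge 9/16$, verifying~\eqref{ap-cond:large-under-rho-app}. The key payoff of this lower threshold for~\eqref{ap-cond:low-err-mu-app} is the pointwise estimate $r_1(x)r_2(y) \ge 1/16$ on $L$, equivalently $\rho(x,y) \le 16\,\sigma_t(x,y)$, yielding the absolute error bound $\rho(L \cap B) \le 16\,\sigma_t(L\cap B) \le 16\eps$.

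The main technical obstacle is converting this absolute bound into the conditional one $\rho(L\cap B)/\rho(L) \le 16\eps$. The one-sided lower threshold does not by itself lower-bound $\rho(L)$; such a bound requires also controlling the density ratio from above on $L$. My plan is to refine the construction via a dyadic decomposition of $\cX$ and $\cY$ by $\lfloor\log_2 r_1\rfloor$ and $\lfloor\log_2 r_2\rfloor$, working with sub-cells $\cX^{(a)}\times\cY^{(b)}$ on which $r_1 r_2$ is pinched within a factor of $4$. On each such rectangular cell, the $\rho$-conditional error fraction is at most $4$ times the $\sigma_t$-conditional error fraction, and averaging $\sigma_t(B) \le \eps$ against the cell masses (Markov's inequality on the cell-indexed error fractions) identifies cells of total $\sigma_t$-mass at least $9/16$ where the local $\sigma_t$-error fraction is at most $4\eps$. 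The delicate point is committing to a \emph{single} rectangular cell rather than a union---this forces independent choices of the thresholds for $L_1$ and $L_2$, exploiting the product structure of $\sigma_t$ and $\rho$ to accumulate enough $\sigma_t$-mass without blowing up the density-ratio spread on the final $L$.
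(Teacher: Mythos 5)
Your preliminary observations are sound: the rectangle property for private-coin protocols does give $\Pr[T=t\mid X=x,Y=y]=q_1(x)q_2(y)$, and consequently $\sigma_t$ is rectangular whenever $\rho$ is. The set $L=L_1\times L_2$ with $L_i=\{z:r_i(z)\ge 1/4\}$ indeed has $\sigma_t(L)\ge 9/16$, and you correctly obtain the absolute bound $\rho(L\cap\{g\ne z\}) \le 16\,\sigma_t(L\cap\{g\ne z\})\le 16\eps$. You have also correctly identified the gap: without a lower bound on $\rho(L)$, the absolute bound does not yield the conditional one.

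The difficulty is that the gap you identified is not closed by your proposed fix, and I do not see how it could be along those lines. The dyadic decomposition into cells $\cX^{(a)}\times\cY^{(b)}$ where $r_1r_2$ is pinched within a factor of four does control the $\rho$-versus-$\sigma_t$ error fraction \emph{cell by cell}, and Markov does identify a collection of cells of total $\sigma_t$-mass $\ge 3/4$ on which the $\sigma_t$-conditional error is $\le 4\eps$. But that collection is a union of rectangles, not a rectangle; and no single cell is guaranteed to carry $\sigma_t$-mass anywhere near $9/16$. To ``accumulate enough $\sigma_t$-mass'' in a single rectangle of the form $\{\ell_1\le r_1\le u_1\}\times\{\ell_2\le r_2\le u_2\}$ while also keeping $u_1u_2/(\ell_1\ell_2)$ bounded by a constant is simply impossible in general: $r_1$ may, for example, spread its $\sigma_{t,1}$-mass evenly across many dyadic scales, so any interval of $\sigma_{t,1}$-mass $3/4$ must span many levels. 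In short, nothing in the hypotheses bounds the density ratio $\sigma_t/\rho$ from \emph{above} on a large rectangle, and your construction fundamentally needs such an upper bound.

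The paper's proof sidesteps density ratios entirely. It defines the rectangle by \emph{error-contribution} thresholds rather than density-ratio thresholds: $\cA$ is the set of rows $x$ on which the ($q_2\rho$-weighted) error fraction is at most $4\eps$, and $\cB$ is the set of columns $y$ on which the $\rho$-error fraction restricted to rows in $\cA$ is at most $16\eps$. This construction makes condition~\eqref{ap-cond:low-err-mu-app} \emph{immediate}: for every $y\in\cB$ we have $\Pr[g(X,Y)\ne z\mid X\in\cA,Y=y]\le 16\eps$, and averaging over $y\in\cB$ gives the conditional bound without any absolute-to-conditional conversion. The work then shifts to the mass bound~\eqref{ap-cond:large-under-rho-app}, and it is \emph{there} that rectangularity of $\rho$ is used: $\Pr[X\in\cA\mid T=t]\ge 3/4$ comes from Markov applied to $\sum q_1q_2\rho$-mass; a second Markov-type argument shows the $q_2\rho$-mass of $\cB$ within $\cA$-rows is at least $3/4$; and crucially, rectangularity of $\rho$ makes the ratio $\sum_{y\in\cB}q_2(y)\rho(x,y)/\sum_{y}q_2(y)\rho(x,y)$ the same constant $\kappa\ge 3/4$ for every $x$, from which $\sigma_t(\cA\times\cB)=\kappa\cdot\Pr[X\in\cA\mid T=t]\ge 9/16$ drops out. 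You should revisit the construction with error-fraction thresholds rather than density-ratio thresholds; the density-ratio route does not appear to be salvageable.
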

\begin{proof}
  By the rectangle property for private-coin
  protocols~\cite[Lemma~6.7]{BarYossefJKS04}, there exist mappings
  $q_1:\cX\to [0,1], q_2:\cY\to [0,1]$ such that $\Pr[T=t \mid X=x, Y=y]
  = q_1(x) q_2(y)$. 


  Let $\tau$ denote the distribution of $T$. We can rewrite the
  condition~\eqref{ap-cond:low-err-rho-app} as
  \begin{equation}
    \sum_{x \in \cX, y \in \cY : g(x,y) \ne z} q_1(x) q_2(y) 
    \rho(x,y) \le \eps\,\tau(t) \label{ap-eq:err-mass-t-app} \, .
  \end{equation}

  Consider the set $\cA$ of rows whose contribution to the left hand
  side of~\eqref{ap-eq:err-mass-t-app} is ``low,'' i.e.,
  \[
    \cA = \bigg\{x \in \cX:\, \sum_{y:g(x,y) \ne z} q_2(y) \rho(x,y)
      \le 4\eps \sum_y q_2(y) \rho(x,y) \bigg\} \, .
  \]
  Then, by a Markov-inequality-style argument, we have $\Pr[ X \in \cA
  \mid T=t] \ge \frac34$.


  Similarly, consider the following set $\cB$ of columns (notice that we
  sum over only $x\in\cA$):
  \[
    \cB = \bigg\{y \in \cY:\, \sum_{x \in \cA: g(x,y) \ne z} \rho(x,y) 
      \le 16\eps \sum_{x \in \cA} \rho(x,y) \bigg\} \, .
  \]
  We now claim that the rectangle $\cA \times \cB$ has the desired
  properties.

  From the definition of $\cB$, it follows that for all $y\in\cB$,
  $\Pr[g(X,y)\not=z\mid X\in A] \le 16\eps$.  Therefore, we have $\Pr[g(X,Y)
  \ne z \mid (X,Y)\in \cA\times\cB] \le 16\eps$ and hence, the rectangle
  $\cA\times\cB$ satisfies condition~\eqref{ap-cond:low-err-mu-app}.
 
  \eat{
  It is easy to see that $\cA \times \cB$ satisfies
  condition~\eqref{ap-cond:low-err-mu-app}: 
  \begin{align*}
  \Pr[g(X,Y) \ne z \mid (X,Y) \in \cA \times \cB] 
  &= \left( \sum_{x \in \cA}\sum_{y \in \cB : g(x,y) \ne z} \rho(x,y) \right) / 
     \left( \sum_{x \in \cA}\sum_{y \in \cB}  \rho(x,y) \right) \\
  &= \left( \sum_{y \in \cB}\sum_{x \in \cA : g(x,y) \ne z}  \rho(x,y) \right) / 
     \left( \sum_{x \in \cA}\sum_{y \in \cB}  \rho(x,y) \right) \\
  &\le  \left( \sum_{y \in \cB} 16\eps \sum_{x \in \cA}  \rho(x,y) \right) / 
     \left( \sum_{x \in \cA}\sum_{y \in \cB}  \rho(x,y) \right) \\
  &= 16\eps
  \end{align*}
  }

Since we know that $\Pr[X\in\cA\mid T=t] \ge 3/4$, to prove that $\Pr[(X,Y)\in \cA\times\cB \mid T=t]\ge 9/16$ 
we will first show that the columns in $\cB$ have significant ``mass'' in $\cA$ using averaging arguments.
\begin{claim} We have
 $\sum_{x \in \cA}\sum_{y \in \cB} q_2(y) \rho(x,y) \geq \frac 34 \sum_{x \in \cA} 
   \sum_{y\in \cY} q_2(y) \rho(x,y)$.
\label{ap-clm:mass-B}
\end{claim}
\begin{proof}
Assume not. Then $\sum_{x \in \cA}\sum_{y \in \cY \setminus \cB} q_2(y) \rho(x,y) 
\geq \frac14 \sum_{x \in \cA} \sum_{y\in\cY} q_2(y) \rho(x,y)$. Therefore,
\begin{eqnarray*}
 \sum_{y\in\cY} \sum_{x \in \cA : g(x,y) \ne z} q_2(y)  \rho(x,y)
  &\geq& \sum_{y \in \cY \setminus \cB} q_2(y) \sum_{x \in \cA : g(x,y) \ne z} \rho(x,y)\\
  &>& 16\eps \sum_{y \in \cY \setminus \cB} q_2(y) \sum_{x \in \cA} \rho(x,y)
   \quad (\text{by def of $\cB$})\\
  &\geq& 4\eps \sum_{y\in \cY} q_2(y) \sum_{x \in \cA} \rho(x,y) \, ,
\end{eqnarray*}
which contradicts the definition of $\cA$.
\end{proof}

Recall that $\rho$ is a rectangular distribution. Suppose $\eta_1$ and
$\eta_2$ are its marginals, i.e., $\rho(x,y) = \eta_1(x) \eta_2(y)$.  We now
observe that the fraction $\sum_{y \in \cB}q_2(y) \rho(x,y)/
\sum_{y\in\cY} q_2(y) \rho(x,y)$ is the same for all $x\in\cX$. We have
\[
  \frac{\sum_{y \in \cB}q_2(y)  \rho(x,y)} {\sum_{y\in\cY} q_2(y)  \rho(x,y)}
    = \frac{\sum_{y \in \cB}q_2(y)  \eta_1(x)  \eta_2(y)} {\sum_{y\in\cY} q_2(y)  \eta_1(x)  \eta_2(y)}
    = \frac{\eta_1(x)\sum_{y \in \cB}q_2(y)  \eta_2(y)} {\eta_1(x)\sum_{y\in\cY} q_2(y)  \eta_2(y)}
    = \frac{\sum_{y \in \cB}q_2(y)  \eta_2(y)} {\sum_{y\in\cY} q_2(y)  \eta_2(y)} \, ,
\]
which is indeed independent of $x$. Denote this fraction by $\kappa$.
With the above observation and claim~\ref{ap-clm:mass-B}, we can conclude that $\kappa \ge 3/4$.
We can now prove that the rectangle $\cA \times \cB$ satisfies condition~\eqref{ap-cond:large-under-rho-app}
as follows:
\begin{align*}
\sigma_t(\cA \times \cB )
 &= \sum_{x \in \cA} \sum_{y \in \cB} \rho(x,y)  q_1(x)  q_2(y) /\tau(t)\\
 &= \sum_{x \in \cA} \eta_1(x)  q_1(x) \sum_{y \in \cB} q_2(y)  \eta_2(y) /\tau(t)\\
 &= \sum_{x \in \cA} \eta_1(x)  q_1(x) \kappa \sum_{y\in\cY} q_2(y)  \eta_2(y) /\tau(t)\\
 &= \kappa \sum_{x \in \cA, y\in\cY} \eta_1(x)  q_1(x)  q_2(y)  \eta_2(y) /\tau(t)\\
 &= \kappa \sum_{x \in \cA, y\in\cY} \rho(x,y)  q_1(x)  q_2(y) /\tau(t)\\
 &= \kappa \Pr[X\in\cA\mid T=t] \ge \frac{3\kappa}{4} \ge \frac{9}{16} \, .
\qedhere
\end{align*}

\end{proof}

The proof of our next lemma uses the (classical) Substate Theorem due to
Jain, Radhakrishnan and Sen~\cite{JainRS09}. We state this below in a
form that is especially useful for us: it says roughly that if the
relative entropy $\dkl{\lambdaa_1}{\lambdaa_2}$ is upper bounded, then
the events that have significant probability under $\lambdaa_1$ continue to
have significant probability under $\lambdaa_2$.

\begin{fact}[Substate Theorem~\cite{JainRS09}] \label{ap-fact:substate}
  Let $\lambdaa_1$ and $\lambdaa_2$ be distributions on a set
  $\cX$ with $\dkl{\lambdaa_1}{\lambdaa_2} \le d$, for some positive $d$.
  Then, for all $S\ceq\cX$, we have $\lambdaa_2(S) \ge
  \lambdaa_1(S)/2^{2 + 2/\lambdaa_1(S) + 2d/\lambdaa_1(S)}$. \qed
\end{fact}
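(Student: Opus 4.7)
The plan is to control the comparison between $\lambda_1$ and $\lambda_2$ pointwise via the log-likelihood ratio $L(x) := \log_2(\lambda_1(x)/\lambda_2(x))$, whose $\lambda_1$-expectation is exactly $d = \dkl{\lambda_1}{\lambda_2}$. If I can locate a subset $S' \ceq S$ with $\lambda_1(S') \ge \lambda_1(S)/2$ on which $L(x) \le T$ pointwise, then $\lambda_2(x) \ge \lambda_1(x) 2^{-T}$ on $S'$, and summing yields $\lambda_2(S) \ge \lambda_2(S') \ge \lambda_1(S) \cdot 2^{-(T+1)}$. The goal is thus to choose $T$ as small as possible subject to the tail set $\{x \in S : L(x) > T\}$ carrying $\lambda_1$-mass at most $\lambda_1(S)/2$, and then read off the stated bound by tracking constants.

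The subtlety is that $L$ is a signed quantity, so I cannot directly Markov-bound $L$ from its $\lambda_1$-expectation. To circumvent this, I would first bound the negative-part contribution $\E_{\lambda_1}[\max(-L,0)]$ by an absolute constant. Let $E := \{x : \lambda_2(x) > \lambda_1(x)\}$; applying Jensen's inequality to the concave function $\log_2$ under the conditional distribution $(\lambda_1 \mid E)$ gives
\[
\E_{\lambda_1}[\max(-L,0)] \;=\; \lambda_1(E)\,\E_{\lambda_1 \mid E}[\log_2(\lambda_2/\lambda_1)] \;\le\; \lambda_1(E) \log_2\!\bigl(\lambda_2(E)/\lambda_1(E)\bigr) \;\le\; \lambda_1(E)\log_2(1/\lambda_1(E)),
\]
which is at most $(\log_2 e)/e \le 1$ since $x\log_2(1/x)$ is maximized at $x=1/e$. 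Splitting $d = \E_{\lambda_1}[L] = \E_{\lambda_1}[\max(L,0)] - \E_{\lambda_1}[\max(-L,0)]$ then yields $\E_{\lambda_1}[\max(L,0)] \le d+1$.

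Now I apply Markov to the non-negative random variable $\max(L,0)$. Writing $p := \lambda_1(S)$ and choosing $T := 2(d+1)/p$, I obtain $\Pr_{\lambda_1}[L > T] \le (d+1)/T = p/2$, so the good set $S' := \{x \in S : L(x) \le T\}$ has $\lambda_1(S') \ge p/2$. Combining with the pointwise bound $\lambda_2(x) \ge \lambda_1(x) 2^{-T}$ on $S'$,
\[
\lambda_2(S) \;\ge\; 2^{-T}\lambda_1(S') \;\ge\; p\cdot 2^{-(1+T)} \;=\; \lambda_1(S)\,/\,2^{\,1 + 2/p + 2d/p},
\]
which is at least as strong as the stated bound $\lambda_1(S)/2^{2 + 2/\lambda_1(S) + 2d/\lambda_1(S)}$.

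The hard part, and essentially the only nontrivial idea, is recognizing in step two that the naive Markov argument fails because $L$ can be largely negative, and then recovering via the Jensen-based control of $\E_{\lambda_1}[\max(-L,0)]$. Once that negative part is bounded by an absolute constant, the rest is mechanical bookkeeping; any slack in my constant merely inflates the additive and multiplicative terms in the exponent, which the statement of the fact already allows for generously.
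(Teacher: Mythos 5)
Your proof is correct. Note that the paper itself gives no proof of this statement: it is imported as a black-box Fact from Jain, Radhakrishnan and Sen~\cite{JainRS09}, so there is nothing internal to compare against. Your argument is essentially the standard elementary proof of the classical substate theorem: bound the negative part of the log-likelihood ratio $L=\log_2(\lambda_1/\lambda_2)$ by an absolute constant (your Jensen step on the event $\{\lambda_2>\lambda_1\}$ is equivalent to the log-sum inequality, giving the bound $(\log_2 e)/e\le 1$), then apply Markov to $\max(L,0)$ with threshold $T=2(d+1)/\lambda_1(S)$ and sum the pointwise bound $\lambda_2(x)\ge 2^{-T}\lambda_1(x)$ over the surviving half of $S$. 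The only nitpick is the phrase that $\E_{\lambda_1}[L]$ is ``exactly $d$''; it is $\dkl{\lambda_1}{\lambda_2}\le d$, which is all you use. Your constants actually come out slightly stronger than the stated bound (exponent $1+2/\lambda_1(S)+2d/\lambda_1(S)$ rather than $2+2/\lambda_1(S)+2d/\lambda_1(S)$), which of course implies the Fact as stated.
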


\begin{lemma} \label{ap-lem:no-t-with-three}
  Let $t$ be a transcript of a private-coin protocol $P$ for
  $g:\cX\times\cY\to\cZ\cup\{*\}$, and suppose $\out(t) = z \in \cZ$.  Let $\rho$ be a
  rectangular distribution on $\cX \times \cY$, and $\eps \le 1$.  Then at
  most one of the following conditions can hold:
  \begin{align}
    \dkl{\sigma_t}{\rho} &< (\cb^{z,\rho}_\eps(g) - 7) / 4\, , \label{ap-prp:low-kl-div}\\
    \Pr[g(X,Y) \ne z \mid T=t] &\le \eps/16 \, , \label{ap-prp:low-err-rho}
  \end{align}
  where $(X,Y) \sim \rho$, $T = P(X,Y)$, and $\sigma_t = (\rho \mid T = t)$.
\end{lemma}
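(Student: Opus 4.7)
\emph{Proof plan.} The plan is to argue by contradiction, extracting two conflicting bounds on $\rho(L)$ for a carefully chosen rectangle $L$ under the assumption that both~\eqref{ap-prp:low-kl-div} and~\eqref{ap-prp:low-err-rho} hold for the same transcript $t$. As a harmless preliminary, $\dkl{\sigma_t}{\rho}\ge 0$ combined with~\eqref{ap-prp:low-kl-div} forces $\cb^{z,\rho}_\eps(g) > 7$, so I may assume this throughout; otherwise the lemma is vacuously true.

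First, I would feed hypothesis~\eqref{ap-prp:low-err-rho} into Lemma~\ref{lem:app-main} with error parameter $\eps/16$. This is exactly its required input, and its conclusion returns a rectangle $L \subseteq \cX\times\cY$ with $\sigma_t(L) \ge 9/16$ and $\Pr[g(X,Y) \ne z \mid (X,Y)\in L] \le \eps$. Being rectangular and $\eps$-error $z$-monochromatic under $\rho$, $L$ satisfies $\rho(L) \le 2^{-\cb^{z,\rho}_\eps(g)}$ by the definition~\eqref{ap-def:cb} of the corruption bound---this is my upper bound on $\rho(L)$. For the matching lower bound, I would invoke the Substate Theorem (Fact~\ref{ap-fact:substate}) with $\lambdaa_1 = \sigma_t$, $\lambdaa_2 = \rho$, and $d := \dkl{\sigma_t}{\rho}$. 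Because $\sigma_t(L) \ge 9/16$ is a known constant, the theorem delivers $\rho(L) \ge 2^{-(C_1 + C_2 d)}$ for explicit absolute constants $C_1, C_2$; juxtaposing this with the upper bound rearranges to $d \ge (\cb^{z,\rho}_\eps(g) - C_1)/C_2$, which should contradict~\eqref{ap-prp:low-kl-div} once the constants are checked.

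The main obstacle---really the only delicate point---is that final constant check, and it is what forces the slightly peculiar factors $1/4$ and $7$ in~\eqref{ap-prp:low-kl-div}. At $p = 9/16$ the Substate exponent evaluates to $\log(16/9) + 2 + 2/p + (2/p)\,d$, giving $C_2 = 32/9 < 4$ and $C_1 = \log(16/9) + 2 + 32/9 < 7$. Under the assumption $\cb^{z,\rho}_\eps(g) > 7$, this leaves just enough slack for the derived lower bound $d \ge (\cb^{z,\rho}_\eps(g) - C_1)/C_2$ to strictly exceed the hypothesised upper bound $(\cb^{z,\rho}_\eps(g) - 7)/4$, closing the contradiction. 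The separate factor-$16$ blow-up contributed by Lemma~\ref{lem:app-main} is absorbed in advance by invoking that lemma with error parameter $\eps/16$ from the outset, which is why the condition on the right-hand side of~\eqref{ap-prp:low-err-rho} reads $\eps/16$ rather than $\eps$.
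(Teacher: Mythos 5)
Your proposal is correct and follows essentially the same route as the paper: apply Lemma~\ref{lem:app-main} with error parameter $\eps/16$ to extract an $\eps$-error $z$-monochromatic rectangle $L$ with $\sigma_t(L)\ge 9/16$, upper-bound $\rho(L)$ by $2^{-\cb^{z,\rho}_\eps(g)}$ from the definition of corruption, lower-bound $\rho(L)$ via the Substate Theorem, and contradict~\eqref{ap-prp:low-kl-div}. The only cosmetic difference is that the paper relaxes $\sigma_t(L)\ge 9/16$ to $\ge 1/2$ so the exponent becomes exactly $7+4d$, whereas you keep $9/16$ and verify $C_1<7$, $C_2<4$; both constant checks are valid.
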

\begin{proof}
  Suppose condition~\eqref{ap-prp:low-err-rho} holds. Then
  Lemma~\ref{lem:app-main} implies that there exists a
  rectangle $L$ such that $\sigma_t(L)\geq 9/16$ and $\Pr[g(X,Y) \ne z \mid
  (X,Y) \in L] \le \eps$. The latter condition may be rewritten as
  $\rho(L \setminus (g^{-1}(z)\cup g^{-1}(*))) \le \eps\rho(L)$, i.e., $L$ is
  $\eps$-error $z$-monochromatic for $g$ under $\rho$.

  Suppose~\eqref{ap-prp:low-kl-div} also holds. Then, by the Substate
  Theorem, for every subset $S\ceq\cX\times\cY$, we have
  \[
    \rho(S) \ge \frac{\sigma_t(S)}{2^{2 + 2/\sigma_t(S) + 2d/\sigma_t(S)}} \, ,
  \]
  where $d = \dkl{\sigma_t}{\rho}$. Taking $S$ to be the above rectangle $L$,
  and noting that $\sigma_t(L) \ge 1/2$, we have
  \[
    \rho(L) \ge \frac{1}{2^{7+4d}} > \frac{1}{2^{\cb^{z,\rho}_\eps(g)}} \, .
  \]
  Since $L$ is $\eps$-error $z$-monochromatic, the definition of the
  corruption bound tells us that $\cb^{z,\rho}_\eps(g) \le
  -\log\rho(L)$, which contradicts the above inequality.
\end{proof}
\begin{proof}[Proof of Theorem~\ref{ap-thm:scb-ic}]
  Suppose, to the contrary, that $\IC^\rho_\eps(f) \le
  \scb^{z,\rho}_{\eps',\eps}(f)/400 - 1/50$. Let $P^*$ be a protocol for $f$
  achieving the $\eps$-error information cost under $\rho$. By a standard
  averaging argument, we may fix the public randomness of $P^*$ to obtain a
  private-coin protocol $P$ that computes $f$ with error $2\eps$, and has
  $\icost^\rho(P) \le 2\icost^\rho(P^*)$.  Let $g$ be the function achieving
  the maximum in Eq.~\eqref{ap-def:scb}, the definition of the smooth corruption
  bound, with error parameter $\eps'$ and perturbation parameter $\eps$. Then
  $\scb^{z,\rho}_{\eps',\eps}(f) = \cb^{z,\rho}_{\eps'}(g)$ and $P$ computes
  $g$ with error $3\eps \le 1/500$. Furthermore,
  \[
    \rho(g^{-1}(z)) 
    \ge \rho(f^{-1}(z)) - \Pr_{(X,Y)\sim\rho}[f(X,Y) \ne g(X,Y)]
    \ge 3/20 - \eps > 3/20 - 1/500 \, .
  \]
  By Lemma~\ref{ap-lem:low-distortion}, there exists a transcript $t$ of $P$
  satisfying conditions~\eqref{ap-cond:tran-output}, \eqref{ap-cond:low-divergence},
  and~\eqref{ap-cond:low-error}. The right hand side
  of~\eqref{ap-cond:low-divergence} is at most $100\icost^\rho(P^*) <
  (\scb^{z,\rho}_{\eps',\eps}(f) - 7)/4 = (\cb^{z,\rho}_{\eps'}(g) - 7)/4$ and
  the right hand side of~\eqref{ap-cond:low-error} is at most $24\eps \le
  \eps'/16$.

  Therefore, conditions~\eqref{ap-prp:low-kl-div} and~\eqref{ap-prp:low-err-rho} in
  Lemma~\ref{ap-lem:no-t-with-three} are {\em both} satisfied, while $\out(t) =
  z$ and $\rho$ is rectangular, which contradicts that lemma.
\end{proof}

\section{The Information Complexity of Orthogonality and Gap-Hamming} \label{ap-sec:ghd}

We now tackle Theorems~\ref{ap-thm:ghd-ic} and~\ref{ap-thm:ghd-ort}. Since
these results are closely connected with a few recent works, and are both
conceptually and technically interesting in their own right, we begin by
discussing why they take so much additional work.

For the remainder of this paper, $\mu_n$ will denote the uniform distribution
on $\b^n\times\b^n$. We will almost always drop the subscript $n$ and simply
use $\mu$.

\subsection{The Orthogonality Problem}

The first thing to address is why the information complexity of these problems
is not already lower bounded by an existing general result of Barak et
al.~\cite{BarakBCR10}.

\paragraph{The Barak-Braverman-Chen-Rao Approach.~} The protocol compression
technique given by Barak et al.~for {\em rectangular}
distributions relates information complexity under such distributions to
communication complexity in what seems like a near-optimal way. Why then are
we not happy with their result? To understand this, consider a protocol $P$
for $\ort_{1/4,n}$ with communication cost $c$, error $\eps$ (for some sufficiently
small constant $\eps$) and information cost $d$, under the uniform
distribution $\mu$.  Their compression result would compress $P$ to a
$2\eps$-error $\ort$ protocol $P^*$ with
\[
  \cost(P^*) = O\bigg(\frac{d\log (c/\eps)}{\eps^2}\bigg) \, .
\]
By the distributional complexity lower bound for $\ort_{1/4,n}$~\cite{Sherstov11ghd}, we
have $\cost(P^*) = \Omega(n)$. {\em However, this does not imply $d =
\Omega(n)$} or even $d = \Omega(n/\polylog(n))$! In particular, we may have
the weird situation that $d = O(1)$ and $c = 2^{\Omega(n)}$. Thus, our
lower bound for $\IC(\ort_{b,n})$ is in fact a strong result, far from what follows
from prior work.

\paragraph{A Word About Our Approach.~}
Turning to {\em our} proof for a moment, we now see that we need to lower
bound $\cb^\lambda(\ort_{b,n})$ for a {\em rectangular} $\lambda$. We make the
most natural choice, picking $\lambda = \mu$, the uniform input distribution.
Our proof is then heavily inspired by two recent proofs of an optimal
$\Omega(n)$ lower bound on $\RR(\ghd_n)$, namely those of Chakrabarti and
Regev~\cite{ChakrabartiR11}, and Sherstov~\cite{Sherstov11ghd}. At the heart
of our proof is the following anti-concentration lemma, which says that when
pairs $(x,y)$ are randomly drawn from a large rectangle in $\b^n\times\b^n$,
the inner product $\ip xy$ cannot be too sharply concentrated around zero.

\begin{lemma}[Anti-concentration] \label{ap-lem:anti-conc}
  Let $n$ be sufficiently large, let $b \geq 66$ be a constant, 
  and let $\eps=\tail(2.01b)$. Then there exists $\delta > 0$ such that 
  for all $A,B\ceq\b^n$ with $\min\{|A|,|B|\} \ge 2^{n-\delta n}$, we have
  \begin{equation} \label{ap-eq:main-anti-conc}
    \Pr_{(X,Y) \in_R A\times B}\Big[ \ang{X,Y} \notin [-b\sqrt{n}, b\sqrt{n}] 
    \Big] \ge \eps \, ,
  \end{equation}
  where ``$\in_R$'' denotes ``is chosen uniformly at random from''.
\end{lemma}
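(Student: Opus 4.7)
The plan is to combine an entropy/conditioning step---which makes $X$ and $Y$ look essentially uniform on a large common set of coordinates---with a central-limit-type estimate---which converts that near-uniformity into a concrete Gaussian-tail anti-concentration bound for $\langle X,Y\rangle$.

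First I would let $X$ and $Y$ be uniform on $A$ and $B$ respectively and independent, so by hypothesis $\dkl{X}{U_n},\,\dkl{Y}{U_n}\le\delta n$, where $U_n$ is the uniform distribution on $\b^n$. Writing $X=(X_1,\ldots,X_n)$ and applying the chain rule, at most $\sqrt\delta\, n$ indices $i$ can satisfy $\h(X_i\mid X_{<i})<1-\sqrt\delta$; on the remaining indices the conditional entropies are close to $1$, which by the standard entropy-to-bias estimate $\h(p)\ge 1-\Theta((1-2p)^2)$ forces the conditional means $\mu_i:=\E[X_i\mid X_{<i}]$ to be small in an $L^2$ sense. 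After a symmetric argument for $Y$ and an averaging over prefixes (losing at most a small probability-mass factor), I can pass to a situation in which both $X$ and $Y$ are near-uniform on a shared coordinate set $K\ceq[n]$ with $|K|\ge(1-O(\sqrt\delta))n$.

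Next I would fix a ``typical'' $y\in\b^n$ and decompose $\langle X,y\rangle=\sum_{i\in K}(X_i-\mu_i)y_i+\sum_{i\in K}\mu_i y_i+\sum_{i\notin K}X_i y_i$. Since the summands $(X_i-\mu_i)y_i$ form a bounded martingale-difference sequence whose total conditional variance is close to $|K|$, a Berry--Esseen-type bound for martingales shows that the first sum is within Kolmogorov distance $O(|K|^{-1/2})$ of $\mathcal{N}(0,|K|)$. The second sum has typical magnitude $O(\sqrt{\delta n})$ by the $L^2$ bias bound, and the third is bounded in absolute value by $|K^c|=O(\sqrt\delta\,n)$. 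Choosing $\delta$ small enough that each of these error terms is $o(b\sqrt n)$, and using the slack between $b$ and $2.01b$ to absorb both the Berry--Esseen error and the shrinkage of the effective standard deviation, I obtain $\Pr[\,|\langle X,y\rangle|>b\sqrt n\,]\ge 2\,\tail(2.01b)>\tail(2.01b)=\eps$ for typical $y$, which yields the desired unconditional lower bound after averaging over $y$.

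The main obstacle is that the entropy-to-structure step is lossy---a bit with $\h(X_i)=1-\eta$ has bias only $\Theta(\sqrt\eta)$, so every stage ``leaks'' a factor like $\sqrt\delta$---and the target tail $\tail(2.01b)$ is itself exponentially small in $b^2$, so the Berry--Esseen Kolmogorov error $O(n^{-1/2})$ only becomes negligible once $n$ is astronomically large compared to $b$ (consistent with the ``$n$ sufficiently large'' hypothesis). The $\delta$ budget must be split carefully across (i) the excluded coordinate set $|K^c|$, (ii) the cumulative bias $\sum\mu_i y_i$, and (iii) the Berry--Esseen error; the assumption $b\ge 66$ together with the slack factor $2.01$ is exactly what buys enough headroom to absorb all three error terms while preserving a Gaussian tail mass of $\tail(2.01b)$, explaining the particular numerical constants in the statement.
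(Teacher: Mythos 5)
Your high-level intuition---large $A,B$ have small KL divergence from uniform, so most conditional bits look near-uniform, so a CLT should give a Gaussian-tail anti-concentration bound---is a natural first guess, but the error bookkeeping has a fatal flaw and there is a structural issue beyond it.

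The arithmetic flaw is decisive. You correctly observe $|K^c|=O(\sqrt\delta\,n)$, and the worst-case bias term $\big|\sum_{i\in K}\mu_iy_i\big|$ is also $O(\sqrt\delta\,n)$: Cauchy--Schwarz against $\sum\mu_i^2=O(\delta n)$ gives $\sqrt{|K|\cdot\delta n}=O(\sqrt\delta\,n)$, and there is no license to replace this by $O(\sqrt{\delta n})$, because $y\in B$ is \emph{adversarial}, not a fresh uniform vector, and so can correlate with the bias profile of $A$. But $\sqrt\delta\cdot n$ is never $o(\sqrt n)$ for any fixed constant $\delta>0$---the ratio $\sqrt\delta\,n/\sqrt n=\sqrt{\delta n}\to\infty$. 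Making these terms small compared to $b\sqrt n$ would require $\delta=o(1/n)$, which is disallowed: in the lemma, $\delta$ is a constant chosen once, and $n$ then grows. (Concretely, take $A$ fixing $\Theta(\delta n)$ coordinates to $+1$: those excluded coordinates contribute a $\Theta(\delta n)$ shift, far beyond the Berry--Esseen window.)

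The deeper issue is that, even if the magnitudes cooperated, you are treating the drift $\sum_{i\in K}\mu_i(X_{<i})y_i$ as an essentially deterministic shift that can be ``absorbed into the tail.'' It is a random variable adapted to the same filtration as the martingale part, and a random shift added to a near-Gaussian martingale need not have Gaussian tails; controlling it requires a genuinely different mechanism. You also need $K$ to be a \emph{shared} near-uniform set for both $X$ and $Y$, but the chain-rule argument produces coordinate sets $K_A,K_B$ that depend separately on $A$ and $B$ and on the chosen coordinate ordering, and reconciling them is not automatic.

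The paper's proof is structurally different and is designed precisely to dodge these problems: it treats $A$ and $B$ \emph{asymmetrically}. From $A$ it uses Talagrand's projection concentration (not entropy) to extract $\Theta(\sqrt\delta\,n)$ near-orthogonal directions $x_1,\ldots,x_k$ for which $\ip{x_j}{Y}$ concentrates. From $B$ it passes to the continuous Gaussian analogue $\widetilde B$ and applies the relative-entropy chain rule in the Gram--Schmidt basis, plus Pinsker, to find one direction along which the conditional law is nearly standard Gaussian. The contradiction comes from upper- and lower-bounding a single escape probability $\Pr[|Q_j|>(c+\alpha)b]$. The move to the Gaussian analogue of $B$ is essential: it is what lets the chain rule and Pinsker's inequality do the work that your bit-level conditional-mean argument cannot.
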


The proof of this anti-concentration lemma has several technical steps, and we
give this proof in Section~\ref{ap-sec:anti-conc}. Below, we prove
Theorem~\ref{ap-thm:ghd-ic} using this lemma, and then discuss what is new
about this lemma.

\begin{theorem}[Precise restatement of Theorem~\ref{ap-thm:ghd-ic}]
  \label{ap-thm:ghd-ic-precise}
  Let $b\geq 1/5$ be a constant. Then $\cb^{1,\mu}_{\theta}(\ort_{b,n}) =
  \Omega(n)$, for $\theta=\tail(2.01 \max\{66,b\})$. Hence, we have
  $\IC^\mu_{\theta/400}(\ort_{b,n}) = \Omega(n)$.
\end{theorem}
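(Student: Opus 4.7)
The plan is to first lower bound the corruption bound $\cb^{1,\mu}_\theta(\ort_{b,n})$ directly from the anti-concentration Lemma~\ref{ap-lem:anti-conc}, and then to chain through Theorem~\ref{ap-thm:scb-ic}, together with the trivial inequality $\scb \ge \cb$, to extract the information-complexity lower bound.

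Set $b' = \max\{66, b\}$, so that $\theta = \tail(2.01 b')$ and Lemma~\ref{ap-lem:anti-conc} applies with parameter $b'$, furnishing a constant $\delta > 0$. I would show that any $\theta$-error $1$-monochromatic rectangle $A \times B$ for $\ort_{b,n}$ under $\mu$ satisfies $\mu(A \times B) \le 2^{-\delta n}$, whence $\cb^{1,\mu}_\theta(\ort_{b,n}) \ge \delta n = \Omega(n)$. Since $\mu$ is uniform, $1$-monochromaticity of $A \times B$ translates to
\[
  \Pr_{(X,Y) \in_R A \times B}\left[|\ang{X,Y}| > b\sqrt n\right] \le \theta.
\]
Split on whether $\min\{|A|,|B|\} \ge 2^{(1-\delta)n}$. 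If so, Lemma~\ref{ap-lem:anti-conc} forces $\Pr[|\ang{X,Y}| > b'\sqrt n] \ge \theta$, and since $b \le b'$ the event $\{|\ang{X,Y}| > b'\sqrt n\}$ is contained in $\{|\ang{X,Y}| > b\sqrt n\}$; this essentially contradicts the monochromaticity bound (any equality slack can be absorbed into the strict inequality available in the proof of Lemma~\ref{ap-lem:anti-conc}, or equivalently by weakening $\theta$ to any constant fraction of $\tail(2.01 b')$, which still yields $\Omega(n)$). Otherwise, WLOG $|B| < 2^{(1-\delta)n}$, and then $\mu(A \times B) \le |A||B|/2^{2n} \le 2^{-\delta n}$ directly.

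For the information-complexity statement, I would invoke Theorem~\ref{ap-thm:scb-ic} with $\rho = \mu$ (rectangular as a product of uniform marginals), $z = 1$, error parameter $\eps' = \theta$, and perturbation parameter $\eps = \theta/400$. The hypothesis $384\eps \le \eps' < 1/4$ is trivial: $384 \cdot (\theta/400) < \theta$, and $\theta \le \tail(132.66)$ is minuscule. The hypothesis $\mu(\ort_{b,n}^{-1}(1)) \ge 3/20$ reduces via a central-limit/Berry--Esseen estimate on $\ang{X,Y}/\sqrt n$ to $\Pr[|Z| \le b] \ge 3/20 + o(1)$ for a standard normal $Z$, which holds for any $b \ge 1/5$ and $n$ large enough. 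The theorem then delivers
\[
  \IC^\mu_{\theta/400}(\ort_{b,n}) \;\ge\; \frac{1}{400}\,\scb^{1,\mu}_{\theta,\theta/400}(\ort_{b,n}) - \frac{1}{50} \;\ge\; \frac{1}{400}\,\cb^{1,\mu}_\theta(\ort_{b,n}) - O(1) \;=\; \Omega(n),
\]
where the inner inequality uses $\scb \ge \cb$ obtained by plugging $g = f$ into the defining maximum of Eq.~\eqref{ap-def:scb}.

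The main obstacle does not lie in this plan but in Lemma~\ref{ap-lem:anti-conc} itself: once an anti-concentration bound for $\ang{X,Y}$ restricted to combinatorial rectangles is in hand, both the corruption bound and the information-complexity bound follow by essentially routine bookkeeping. The key delicacy above the lemma is the boundary case $\min\{|A|,|B|\} \approx 2^{(1-\delta)n}$, which is handled by the dichotomy between ``rectangle too small to be large in $\mu$'' and ``rectangle big enough to trigger anti-concentration''.
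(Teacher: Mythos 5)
Your proposal is correct and follows essentially the same route as the paper: the corruption bound is obtained by applying Lemma~\ref{ap-lem:anti-conc} with parameter $\max\{66,b\}$ (small rectangles being trivially light under $\mu$, large ones triggering anti-concentration), and the information-complexity bound then follows from Theorem~\ref{ap-thm:scb-ic} with $\rho=\mu$, $z=1$, $\eps'=\theta$, $\eps=\theta/400$, together with $\scb\ge\cb$ and the binomial-tail check that $\mu(\ort_{b,n}^{-1}(1))\ge 3/20$ for $b\ge 1/5$. Your explicit handling of the size dichotomy and of the strict-versus-non-strict boundary case is a minor tidying of a step the paper states more tersely, not a different argument.
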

\begin{proof}
  We first estimate the corruption bound.  Let $\delta$ be the constant whose
  existence is guaranteed by Lemma~\ref{ap-lem:anti-conc}. For $b\ge 66$,
  Eq.~\eqref{ap-eq:main-anti-conc} states precisely that
  $\theta\mbox{-mono}^{1,\mu}(\ort_{b,n}) \le 2^{-\delta n}$. Thus, it follows
  that $\cb^{1,\mu}_{\theta}(\ort_{b,n}) \ge \delta n = \Omega(n)$. For
  $b<66$, we note that
  \[
    \Pr_{(X,Y)\in_R A\times B} \left[\ip XY \notin [-b\sqrt n,b\sqrt n] \right] \ge
    \Pr_{(X,Y)\in_R A\times B} \left[\ip XY \notin [-66\sqrt n,66\sqrt n] \right],
  \]
  for any $A,B\subseteq \b^n$. Therefore, using Lemma~\ref{ap-lem:anti-conc}
  as before, we can conclude that $\cb^{1,\mu}_{\theta}(\ort_{b,n}) =
  \Omega(n)$ for $\theta = \tail(2.01\times 66)$.

  To lower bound the information complexity, we first note that
  \[
    \scb^{1,\mu}_{\theta,\theta/400}(\ort_{b,n}) 
    \ge \scb^{1,\mu}_{\theta,0}(\ort_{b,n}) 
    = \cb^{1,\mu}_{\theta}(\ort_{b,n}) 
    = \Omega(n) \, .
  \]
  Since $b\geq 1/5$, standard estimates of the tail of a binomial distribution
  give us that $\mu(\ort_{b,n}^{-1}(1))> 3/20$ for large enough $n$. Further,
  we have $\theta =\tail(2.01\max \{66,b\}) < 1/4$. Applying
  Theorem~\ref{ap-thm:scb-ic}, we conclude that
  $\IC^\mu_{\theta/400}(\ort_{b,n}) = \Omega(n)$.
\end{proof} 

We now address why the approaches in two recent works do not suffice to prove
Lemma~\ref{ap-lem:anti-conc}.

\paragraph{The Sherstov Approach.~}
At first glance, Lemma~\ref{ap-lem:anti-conc} may appear to be essentially
Sherstov's Theorem 3.3, but it is not! Sherstov's theorem is a special case of
ours that fixes $b = 1/4$, and the smallness of that choice is crucial to
Sherstov's proof. In particular, his proof does not work once $b > 1$. In
order to connect $\ort$ to $\ghd$, however, we need this anti-concentration
with $b$ being a {\em large} constant. Looking ahead a bit, this is because we
need the upper bound in Eq.~\eqref{eq:hb-close-ghd} to be tight enough.

The reason that Sherstov's approach requires $b$ to be small is technical, but
here is a high-level overview.  He relies on an inequality of Talagrand (which
appears as~\cite[Fact 2.2]{Sherstov11ghd}) which states that the projection of
a random vector from $\b^n$ onto a linear subspace $V \ceq \R^n$ is sharply
concentrated around $\sqrt{\dim V}$, which is at most $\sqrt{n}$. Once $b >
1$, this sharp concentration works {\em against} his approach and, in
particular, fails to imply anti-concentration of $\ip XY$ in $[-b\sqrt n,
b\sqrt n]$, which is now too large an interval.

\paragraph{The Chakrabarti-Regev Approach.~}
At second glance, Lemma~\ref{ap-lem:anti-conc} may appear to be a variant of the
``correlation inequality'' (Theorem 3.5 and Corollary 3.8) of Chakrabarti and
Regev. This is true to an extent, but crucially our lemma is not a corollary
of that correlation inequality, which we state below.

\begin{fact}[Equivalent to Corollary~3.8 of~\cite{ChakrabartiR11}]
\label{ap-fact:cr-correl}
  Let $n$ be sufficiently large, and let $b > 0$ and $\eps > 0$ be constants.
  Then there exists $\delta > 0$ such that for all
  $A,B\ceq\b^n$ with $\min\{|A|,|B|\} \ge 2^{n-\delta n}$, we have
  \begin{equation} \label{ap-eq:cr-correl}
    \nu_b(A\times B) \ge (1-\eps) \mu(A\times B) \, ,
  \end{equation}
  where $\nu_b = \frac12(\xi_{-2b/\sqrt{n}} + \xi_{2b/\sqrt{n}})$ and $\xi_p$
  is the distribution of $(x,y) \in \b^n\times\b^n$ where we pick $x\in_R\b^n$
  and choose $y$ by flipping each coordinate of $x$ independently with
  probability $(1-p)/2$.
\end{fact}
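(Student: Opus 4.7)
I would prove Fact~\ref{ap-fact:cr-correl} by Fourier analysis on $\b^n$ combined with the Bonami--Beckner hypercontractive inequality. Set $p := 2b/\sqrt n$ and let $\chi_S(x) = \prod_{i\in S} x_i$ denote the parity characters. A direct coordinatewise computation under $\xi_p$ gives $\E_{(x,y)\sim\xi_p}[\chi_S(x)\chi_T(y)] = p^{|S|}\,\mathbf{1}[S=T]$. Fourier-expanding the indicators therefore yields
\[
\xi_p(A\times B) \;=\; \sum_S \hat 1_A(S)\,\hat 1_B(S)\,p^{|S|},
\]
and averaging the $\pm p$ cases cancels all odd-$|S|$ terms:
\[
\nu_b(A\times B) \;=\; \sum_{|S|\text{ even}} \hat 1_A(S)\,\hat 1_B(S)\,p^{|S|}.
\]
The $S=\emptyset$ term contributes exactly $\mu(A)\mu(B)=\mu(A\times B)$, so everything reduces to bounding the mean-zero tail.

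\textbf{Main estimate.} Write $f := 1_A - \mu(A)$ and $g := 1_B - \mu(B)$; both have vanishing empty-set Fourier coefficient, so it suffices to show
\[
\Bigl|\sum_{|S|\ge 2,\,\text{even}} \hat f(S)\,\hat g(S)\,p^{|S|}\Bigr| \;\le\; \eps\,\mu(A)\mu(B).
\]
By Cauchy--Schwarz this is at most $\|T_p f\|_2\cdot\|T_p g\|_2$, where $T_p$ is the Bonami--Beckner noise operator acting by $\widehat{T_p h}(S) = p^{|S|}\hat h(S)$. Since $\|T_p f\|_2^2 = \|T_p 1_A\|_2^2 - \mu(A)^2$, the hypercontractive inequality $\|T_p 1_A\|_2 \le \|1_A\|_{1+p^2} = \mu(A)^{1/(1+p^2)}$ combines with $p^2 = 4b^2/n$ and the density hypothesis $\mu(A)\ge 2^{-\delta n}$ to give, after routine manipulation,
\[
\|T_p f\|_2 \;\le\; \mu(A)\,\sqrt{\,2^{\,O(b^2\delta)}-1\,},
\]
and analogously for $g$. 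Choosing $\delta = \delta(b,\eps)$ small enough forces $2^{O(b^2\delta)}-1 \le \eps^2$, which yields the claimed $(1-\eps)\mu(A\times B)$ lower bound.

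\textbf{Main obstacle.} The substantive work is concentrated in the hypercontractive step: elementary Parseval or direct Cauchy--Schwarz arguments alone only tolerate a polynomial density threshold $\mu(A) \ge 1/\poly(n)$, whereas the exponential threshold $2^{-\delta n}$ in the statement crucially requires hypercontractivity (or an equivalent level-$k$ inequality). The one delicate bookkeeping point is that the noise parameter $p$ itself depends on $n$, so the implicit constant in the exponent $2s \cdot p^2/(1+p^2)$ (with $s = |\log\mu(A)| \le \delta n$) must be controlled uniformly in $n$; this is easy to get wrong if one is cavalier about the expansion $1/(1+p^2) = 1 - p^2 + O(p^4)$.
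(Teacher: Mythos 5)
You are trying to prove a statement that the paper itself does not prove: Fact~\ref{ap-fact:cr-correl} is imported as being equivalent to Corollary~3.8 of Chakrabarti--Regev, whose proof is a substantial divergence-based argument (in spirit like Section~\ref{ap-sec:anti-conc} of this paper), not a short hypercontractive estimate. Your Fourier setup is fine ($\nu_b(A\times B)=\sum_{|S|\,\mathrm{even}}\hat 1_A(S)\hat 1_B(S)p^{|S|}$ with $p=2b/\sqrt n$), but the argument has a fatal gap at the Cauchy--Schwarz step. Since $\sum_S\hat f(S)\hat g(S)p^{|S|}=\langle T_{\sqrt p}f,\,T_{\sqrt p}g\rangle$, Cauchy--Schwarz gives $\|T_{\sqrt p}f\|_2\|T_{\sqrt p}g\|_2$, \emph{not} $\|T_pf\|_2\|T_pg\|_2$ (the latter is not an upper bound: for $f=g=\chi_S$ the left side is $p^{|S|}$ while your right side is $p^{2|S|}$). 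With the correct noise rate $\sqrt p$, hypercontractivity gives $\|T_{\sqrt p}1_A\|_2\le\mu(A)^{1/(1+p)}$, and the resulting bound is $\mu(A)\mu(B)\cdot 2^{\Theta(p\delta n)}=\mu(A)\mu(B)\cdot 2^{\Theta(b\delta\sqrt n)}$, which diverges with $n$ for every fixed $\delta>0$; the benign exponent $p^2\delta n=O(b^2\delta)$ that your numerology relies on is an artifact of the incorrect normalization.

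More fundamentally, the ``Main estimate'' you reduce to is simply false, so no bookkeeping repair can rescue the plan. Take $A=B$ to be the subcube fixing the first $\delta n$ coordinates to $+1$, so $\mu(A)=2^{-\delta n}$ meets the hypothesis. Then $\hat 1_A(S)=\mu(A)$ for every $S$ contained in the fixed coordinates, all even-level terms are nonnegative, and already the level-two term gives
\[
\sum_{|S|\ge 2,\ |S|\ \mathrm{even}}\hat 1_A(S)\hat 1_B(S)\,p^{|S|}
\;\ge\; p^2\binom{\delta n}{2}\mu(A)^2
\;\approx\; 2b^2\delta^2 n\,\mu(A\times B),
\]
which is $\omega\bigl(\mu(A\times B)\bigr)$, not $\le\eps\,\mu(A\times B)$. (The same example shows your intermediate inequality fails badly: here $\|T_pf\|_2^2=\mu(A)^2\bigl((1+p^2)^{\delta n}-1\bigr)=\mu(A)^2\bigl(2^{O(b^2\delta)}-1\bigr)$ stays bounded while the even tail grows linearly in $n$ relative to $\mu(A)^2$.) In this example the tail is hugely \emph{positive}, so the conclusion of the Fact still holds; the lesson is that Fact~\ref{ap-fact:cr-correl} is intrinsically a one-sided, signed statement, and any approach that bounds the \emph{absolute value} of the mean-zero even tail---via Cauchy--Schwarz, Parseval, or level-$k$/hypercontractive inequalities---cannot work at the $2^{-\delta n}$ density scale. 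The real difficulty is exactly the cancellation structure you discard, which is why the known proof (Chakrabarti--Regev) needs much heavier machinery; your ``main obstacle'' paragraph misidentifies where the hard part lies.
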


The above is also an anti-concentration statement about inner products in a
large rectangle. One might therefore hope to use it to prove
Lemma~\ref{ap-lem:anti-conc} by showing that one kind of anti-concentration
implies the other for ``counting'' reasons. That is, one might hope that every
large {\em set} $S\ceq\b^n\times\b^n$ that satisfies an inequality
like~\eqref{ap-eq:cr-correl} also satisfies one like~\eqref{ap-eq:main-anti-conc}.

But this is not the case.  Consider the set $S = S_0\cup S_{2b}$ where $S_0$
is any subset of $2^{2n-\delta n}$ inputs such that for all $(x,y)\in S_0$ we
have $\ip xy =0$, and $S_{4b}$ is any subset of $(\eps/2)|S_0|$ inputs such
that for all $(x,y)\in S_{4b}$ we have $\ip xy = 4b\sqrt n$. Then, by
construction, we have $\Pr_{(x,y)\in_R S}\big[\ip xy \notin [-b\sqrt n, b\sqrt
n]\big] \le \eps/2 < \eps$, so $S$ does not satisfy an inequality
like~\eqref{ap-eq:main-anti-conc}.  However, for several choices of $\eps$ and
$b$, it does satisfy the analogue of inequality~\eqref{ap-eq:cr-correl}: a short
calculation shows that $\nu_b(S) \ge \frac12\xi_{2b/\sqrt n}(S_{4b}) \ge
\frac12\eps\,e^{5b^2}\mu(S) \ge \mu(S)$. 

Thus, even given Fact~\ref{ap-fact:cr-correl}, we still need to use the
rectangularity of $S$ to prove Lemma~\ref{ap-lem:anti-conc}. It is this need to
use rectangularity carefully that leads to the longish technical proof to
follow, in Section~\ref{ap-sec:anti-conc}.

\subsection{The Gap-Hamming Problem}
\label{subsec:cnt}

We now address the issue of proving a strong lower bound on $\IC^\mu(\ghd)$.
As before, we first note why existing methods do not imply an $\Omega(n)$
lower bound, and then give our approach. We stress that our approach is, at
this point, a {\em program only} and stops short of settling
Conjecture~\ref{conj:ghd-ic}, i.e., proving that $\IC^\mu(\ghd) = \Omega(n)$.

\paragraph{Previous Approaches.~} The orthogonality problem $\ort$ is
intimately related to the Gap-Hamming Distance problem $\ghd$. This was first
noted by Sherstov, who used an ingenious technique to prove that $\RR(\ghd_n)
= \Omega(n)$ based on his lower bound $\RR(\ort_{1/4,n}) = \Omega(n)$. He gave
a reduction from $\ort$ to $\ghd$ wherein a protocol for $\ghd$ was called
{\em twice} to obtain a protocol for $\ort$. But this style of reduction does
not yield a relation between {\em information} complexities, and so the lower
bound on $\IC^\mu(\ort)$ in Theorem~\ref{ap-thm:ghd-ic-precise} does not
translate into a lower bound on $\IC^\mu(\ghd)$.

The Chakrabarti-Regev proof~\cite{ChakrabartiR11} of the same bound
$\RR(\ghd_n) = \Omega(n)$ introduces a technique that they call
corruption-with-jokers which in turn is subsumed by what Jain and
Klauck~\cite{JainK10} have called the ``smooth rectangle bound.'' In fact,
Jain and Klauck define two variants of the smooth rectangle bound: a
linear-programming-based variant that they denote $srec$, and a ``natural''
variant that they denote $\widetilde{srec}$.  It is the former variant that
subsumes the Chakrabarti-Regev technique, whereas our work here corresponds to
the latter variant.

Jain and Klauck do give a pair of translation lemmas, showing that the two
variants are asymptotically equivalent up to some changes in parameters.
Therefore, the Chakrabarti-Regev approach does yield a lower bound on
$\scb^\lambda(\ghd_n)$, but the distribution $\lambda$ that comes out of
applying the appropriate translation lemma is non-rectangular. Therefore, we
cannot apply Theorem~\ref{ap-thm:scb-ic}.

Furthermore, even granting Conjecture~\ref{ap-conj:main} (as claimed by
Kerenidis et al.~\cite{Kerenidis+12}), this line of reasoning will only lower
bound $\IC^\lambda(\ghd)$ for an artificial distribution $\lambda$, and will
not lower bound $\IC^\mu(\ghd)$.

\paragraph{Our Approach.~} 
Our idea is that, for large $b$, the function $\ghd_n$ is at least as ``hard''
as a function that is ``close'' to $\ort_{b,n}$, under a uniform input
distribution. To be precise, we have the following connection between $\ghd$
and $\ort$. Recall that $\mu_n$ is the uniform distribution on
$\b^n\times\b^n$.

\begin{theorem}[Precise restatement of Theorem~\ref{ap-thm:ghd-ort}]
  \label{ap-thm:ghd-ort-precise}
  Let $n$ be sufficiently large, let $b\ge 100$ be a constant, and let
  $\tail(1.99b)\le \theta \le 1/1600$. Let $n'= n + \frac12(1.99b-1)\sqrt n$.  Then, we have
  \[
    \scb^{1,\mu_n}_{400\theta, \theta}(\ghd_n) 
    = \Omega(\cb_{400\theta}^{1,\mu_{n'}}(\ort_{b,n'})) - O(\sqrt n) \, .
  \]
  Combining this with Theorem~\ref{ap-thm:scb-ic}, we then have
  $\IC^{\mu_n}_\theta(\ghd_n)=\Omega(\cb_{400\theta}^{1,\mu_{n'}}(\ort_{b,n'}))
  -O(\sqrt n)$.
\end{theorem}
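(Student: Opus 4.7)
The plan is to exhibit one specific partial Boolean function $g$ that is $\theta$-close to $\ghd_n$ under $\mu_n$ and whose $1$-monochromatic rectangles, after an asymmetric padding, become $1$-monochromatic rectangles for $\ort_{b,n'}$. Setting $m := n'-n = \tfrac12(1.99b-1)\sqrt{n}$, $u^+ := b\sqrt{n'}+m$ and $u^- := -b\sqrt{n'}+m$, I will take
\[
  g(x,y) \;=\; \begin{cases} 1, & \sqrt{n}\le \ip{x}{y}\le u^+, \\ -1, & \ip{x}{y}\le -\sqrt{n} \ \text{or}\ \ip{x}{y}>u^+, \\ *, & \text{otherwise.}\end{cases}
\]
The only disagreement between $g$ and $\ghd_n$ is on $\{\ip{x}{y}>u^+\}$. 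A sharp tail estimate for binomial sums gives $\Pr_{\mu_n}[\ip{X}{Y}>u^+] = (1+o(1))\,\tail(u^+/\sqrt{n})$, and because $u^+/\sqrt{n}\to b+m/\sqrt{n}=(3.99b-1)/2$ combined with the elementary inequality $(3.99b-1)/2\ge 1.99b\Longleftrightarrow b\ge 100$, this mass is at most $\tail(1.99b)=\theta$ for $n$ sufficiently large. Thus $g$ is an admissible witness in the definition of $\scb^{1,\mu_n}_{400\theta,\theta}(\ghd_n)$.

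Next I would lower bound $\cb^{1,\mu_n}_{400\theta}(g)$. Fix any $400\theta$-error $1$-monochromatic rectangle $R=A\times B$ for $g$ under $\mu_n$; thus $\Pr_R[g(X,Y)=-1]\le 400\theta$. I would then pad asymmetrically, forming the rectangle $R^{\mathrm{pad}} := (A\times\{+1\}^m)\times(B\times\{-1\}^m)$ in $\b^{n'}\times\b^{n'}$. Since $m$ coordinates are fixed on each side, $\mu_{n'}(R^{\mathrm{pad}})=\mu_n(R)/2^{2m}$, and the asymmetric choice of padding yields $\ip{x'}{y'}=\ip{x}{y}-m$. Therefore $\ort_{b,n'}(x',y')=-1$ iff $\ip{x}{y}>u^+$ or $\ip{x}{y}<u^-$. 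Here the condition $b\ge 100$ pays off a second time: it forces $u^-\le-\sqrt{n}$, which gives the crucial containment
\[
  \{(x,y):\ort_{b,n'}(x',y')=-1\}\ \subseteq\ \{\ip{x}{y}\le-\sqrt{n}\}\cup\{\ip{x}{y}>u^+\}\ =\ \{g(x,y)=-1\}.
\]
This immediately gives $\Pr_{R^{\mathrm{pad}}}[\ort_{b,n'}=-1]\le\Pr_R[g=-1]\le 400\theta$, so $R^{\mathrm{pad}}$ is a $400\theta$-error $1$-monochromatic rectangle for $\ort_{b,n'}$. Applying $\cb^{1,\mu_{n'}}_{400\theta}(\ort_{b,n'})$ and converting via the factor $2^{2m}$ yields $\mu_n(R)\le 2^{2m-\cb^{1,\mu_{n'}}_{400\theta}(\ort_{b,n'})}$, i.e., $\cb^{1,\mu_n}_{400\theta}(g)\ge\cb^{1,\mu_{n'}}_{400\theta}(\ort_{b,n'})-2m=\cb^{1,\mu_{n'}}_{400\theta}(\ort_{b,n'})-O(\sqrt{n})$. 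The first half of the theorem then follows from $\scb^{1,\mu_n}_{400\theta,\theta}(\ghd_n)\ge\cb^{1,\mu_n}_{400\theta}(g)$, and plugging this into Theorem~\ref{ap-thm:scb-ic}---whose hypotheses ($\mu_n(\ghd_n^{-1}(1))\approx\tail(1)>3/20$ and $400\theta\le 1/4$) are immediate from the constraints on $\theta$---yields the stated information-complexity lower bound.

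The main obstacle, and indeed the only non-routine step, is designing $g$ so that both components of the ``bad'' $\ort$ event (the upper tail at $u^+$ and the lower tail at $u^-$) fit simultaneously inside $\{g=-1\}$, while the total disagreement with $\ghd_n$ remains within the perturbation budget $\theta$. The asymmetric padding is what achieves this: appending $+1$'s to Alice and $-1$'s to Bob shifts the inner product by $-m$ rather than $+m$, which makes $u^-$ fall below $-\sqrt{n}$ (where $\ghd_n$ already reads $-1$, so no adjustment of $g$ is needed there) and places $u^+$ in a tail of $\mu_n$-mass only $\tail((3.99b-1)/2)$. The condition $b\ge 100$ and the specific choice $n'=n+\tfrac12(1.99b-1)\sqrt{n}$ are exactly the parameters that align both conditions at once; any weakening would either break the containment $u^-\le-\sqrt{n}$ or push the perturbation above $\theta$. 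With the alignment in place, the remaining work---estimating the $o(1)$ corrections in the tail approximation of the binomial, and tracking the $2^{2m}=2^{O(\sqrt{n})}$ factor---is entirely routine.
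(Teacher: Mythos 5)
Your proof is correct and takes essentially the same approach as the paper's: both define a partial function that agrees with $\ghd_n$ except above a high threshold, use the asymmetric padding $x'=(1^m,x),\ y'=((-1)^m,y)$ so that $\ip{x'}{y'}=\ip{x}{y}-m$, verify the containment $\{\ort_{b,n'}(x',y')=-1\}\subseteq\{\text{perturbed function}=-1\}$ using $b\ge 100$ on both tails, transfer the corruption bound with a $2^{2m}$ loss, and bound the perturbation by $\tail(1.99b)\le\theta$. The only cosmetic difference is that you use the exact padding-induced thresholds $u^\pm = \pm b\sqrt{n'}+m$ for the perturbed function, whereas the paper rounds the upper one to the slightly smaller and tidier $1.99b\sqrt{n}$; since $u^+\ge 1.99b\sqrt{n}$ when $b\ge 100$, your perturbation is in fact no larger than theirs and the argument is otherwise identical.
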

\begin{proof}[Remark]  
  Suppose we could strengthen Theorem~\ref{ap-thm:ghd-ic-precise} by changing
  the constant $2.01$ in that theorem to $1.98$, i.e., suppose we had
  $\cb^{1,\mu}_{\eps}(\ort_{b,n}) = \Omega(n)$ with $\eps = \tail(1.98b)$.
  Then the present theorem would give us $\IC^\mu_{\eps/400}(\ghd_n) = 
  \Omega(n)$, since $\eps/400 > \tail(1.99b)$ for large enough $b$.
  \renewcommand{\qedsymbol}{}
\end{proof}
\begin{proof}  
  Put $t = n' - n = \frac12(1.99b-1)\sqrt{n}$. Consider the padding $(x,y)
  \in\b^n \longmapsto (x',y') \in\b^{n'}$ defined by $x' = (1,1,\ldots,1,x)$
  and $y' = (-1,-1,\ldots,-1,y)$.  Then we have $\ip {x'}{y'} = \ip xy - t$.
  Since $b \ge 100$, for $b' := 1.99b$, we have
  \begin{equation}
   \label{eq:padding}
     \ip xy \in \big[-\sqrt n, b'\sqrt n\big] \implies
     \ip{x'}{y'} \in \big[-b\sqrt{n'}, b\sqrt{n'}\big] \, .
  \end{equation} 
  Let $h:\b^n\times\b^n\to\b$ be the partial function defined as follows:
  \[
    h(x,y)= \begin{cases}
      \ghd_n(x,y), &\text{if~} \ip xy \le b'\sqrt n \, , \\
      -\ghd_n(x,y), &\text{if~} \ip xy > b'\sqrt n \, . \\
    \end{cases}
  \]
  From~\eqref{eq:padding} and the definition of $\ort$ we can conclude that 
  $\ort_{b,n'}(x',y')\not=1 \implies h(x,y) \notin \{1,*\}$ for all $x,y\in\b^n$.
  Thus, for any rectangle $R\subseteq \b^n\times\b^n$, we have
  \[
    \frac{|(x,y)\in R: h(x,y) \notin \{1,*\}|}{|R|} \ge \frac{|(x',y')\in R': \ort_{b,n'}
    (x',y') \not=1|}{|R'|} \, ,
  \]
  where $R'\subseteq\b^{n'}\times\b^{n'}$ is the rectangle obtained by padding
  each $(x,y)\in R$ as above. Therefore, if $R$ is $\eps$-error
  1-monochromatic for $h$ under $\mu_n$, then $R'$ is $\eps$-error
  1-monochromatic for $\ort_{b,n'}$ under $\mu_{n'}$. Hence,
  $\epsmono^{1,\mu_n}(h) \le 2^{2t}\epsmono^{1,\mu_{n'}} (\ort_{b,n'})$ and
  thus, $\cb^{1,\mu_n}_\eps(h) \ge
  \cb^{1,\mu_{n'}}_\eps\left(\ort_{b,n'}\right)-2t$.

  By standard estimates of the tail of a binomial
  distribution~\cite{Feller-book}, we have
  \begin{equation} \label{eq:hb-close-ghd}
    \Pr_{(X,Y)\sim\mu_n}[h(X,Y) \ne \ghd_n(X,Y)] 
    = \Pr_{(X,Y)\sim\mu_n}[\ip XY > b'\sqrt n] 
    \le \tail(b') = \tail(1.99b) \, .
  \end{equation}
  Therefore, $\scb^{1,\mu_n}_{\eps,\theta}(\ghd_n) \ge \cb^{1,\mu_n}_\eps(h)\ge
  \cb^{1,\mu_{n'}}_{\eps}(\ort_{b,n'}) -2t$ with $\theta \ge
  \tail(1.99b)$. The proof is now completed by applying Theorem~\ref{ap-thm:scb-ic}:
  for the setting $\eps=400\theta$, we have
  $0 \le 384\theta \le \eps < 1/4$ and $\mu_n(\ghd_n^{-1}(1)) \ge 3/20$. 
  Therefore, we can conclude
  \[
    \IC^\mu_\theta(\ghd_n) 
    = \Omega\big(\scb^{1,\mu_n}_{\eps,\theta}(\ghd_n)\big) - O(1)
    = \Omega(\cb^{1,\mu_{n'}}_\eps(\ort_{b,n'})) -O(\sqrt n) \, . \qedhere
  \]
\end{proof}

\section{Proof of the Anti-Concentration Lemma} \label{ap-sec:anti-conc}

Finally, we turn to the most technical part of this work: a proof of our new
anti-concentration lemma, stated as Lemma~\ref{ap-lem:anti-conc} earlier.

\subsection{Preparatory Work and Proof Overview}

Let us begin with some convenient notation. We denote the (density function of
the) standard normal distribution on the real line $\R$ by $\gamma$. We also
denote the standard $n$-dimensional Gaussian distribution by $\gamma^n$. For a
set $A \ceq \R^n$, we denote by $\gamma^n|_{A}$ the distribution $\gamma^n$
conditioned on belonging to $A$. For a distribution $P$ on $\R^n$, we define
its ``distance to Gaussianity'', denoted $\DD_\gamma(P)$ as follows.
\[
  \DD_\gamma(P) = \dd{P}{\gamma^n}
  := \int P(x) \ln\frac{P(x)}{\gamma^n(x)} dx \, .
\]
The latter quantity is the well-known relative entropy for continuous
probability distributions, and is the analogue of $\DD_{\mathrm{KL}}$, which
we have used earlier. Note that the logarithm here is to the base $e$, and not
$2$ as it was earlier.

Let $X,Y$ be possibly correlated random variables, with density functions
$P_X$ and $P_Y$ respectively. Let $P_{X\mid Y=y}$ denote the conditional
probability density function of $X$ given the value $y$ of $Y$. We will
sometimes write $\DD_\gamma(X)$ as shorthand for $\DD_\gamma(P_X)$, and we
will define
\[
  \DD_\gamma(X\mid Y) = \E_y[ \dd{P_{X\mid Y=y}}{\gamma} ] \, .
\]

For a vector $x\in\R^n$ and a linear subspace $V\ceq\R^n$, we denote the
orthogonal projection of $x$ onto $V$ by $\proj_V x$. We denote the Euclidean
norm of $x$ by $\|x\|$.

\paragraph{The Setup.~}
For a contradiction, we begin by assuming the negation of
Lemma~\ref{ap-lem:anti-conc}. That is, we assume that there is a constant
$b\ge 66$ such that for all constants $\delta
> 0$, there exist $A,B\ceq\b^n$ such that
\begin{equation}\label{ap-eq:largerec}
\min\{|A|,|B|\} \ge 2^{n-\delta n} \text{ and }
\end{equation}
\begin{equation} \label{ap-eq:conc}
  \Pr_{(X,Y) \in_R A\times B}\Big[ \ang{X,Y} \notin [-b\sqrt{n}, b\sqrt{n}] 
  \Big] < \eps := \tail(2.01b) \, .
\end{equation}

We treat the sets $A$ and $B$ asymmetrically in the proof. Using the largeness
of $A$, and appealing to a concentration inequality of Talagrand, we identify
a subset $V\ceq A$ consisting of $\Theta(n)$ vectors such that
\begin{enumerate}
  \item[(P1)] the vectors in $V$ are, in some sense, near-orthogonal; and
  \item[(P2)] the quantity $\ip xY$, where $y\in_R B$, is concentrated around zero
  for {\em each} $x\in V$, in the sense of~\eqref{ap-eq:conc}.
\end{enumerate}
This step is a simple generalization of the first part of Sherstov's argument
in his proof that $\RR(\ghd_n) = \Omega(n)$.

As for the set $B$, we consider its {\em Gaussian analogue} $\widetilde B :=
\{\tilde y \in \R^n: \sign(\tilde y)\in B\}$. Consider the random variable
$Q_x = \ip{x}{\widetilde Y}/\sqrt n$, for an arbitrary $x\in V$ and
$\widetilde Y\sim \gamma^n|_{\widetilde{B}}$. On the one hand, we can show
that property~(P2) above implies ``concentration'' for $Q_x$ in some sense.
Combined with property~(P1), we have that projections of the set $\widetilde
B$ along $\Omega(n)$ near-orthogonal directions are all ``concentrated.'' On
the other hand, arguing along the lines of Chakrabarti-Regev, we cannot have
too much concentration along so many near-orthogonal directions, because
$\widetilde B$ is a ``large'' subset of $\R^n$. The incompatibility of these
two behaviors of $Q_x$ gives us our desired contradiction.

It remains to identify a suitable notion of ``concentration'' that lets us
carry out the above program.  The notion we choose is the {\em escape
probability} $p^* = \Pr[|Q_x| > (c+\alpha)b]$, for suitable constants
$c,\alpha > 0$ that we shall determine later.


\subsection{The Actual Proof} \label{ap-subsec:anti-conc-proof}

Let $Y$ denote a uniformly distributed vector in $B$.  Define the set
\begin{equation} \label{ap-eq:c-def}
  C := \{x\in A: \Pr_{Y\in_R B}[\ip xY \notin [-b\sqrt n, b\sqrt n]] 
    < 2\eps\} \, . 
\end{equation}
By Eq.~\eqref{ap-eq:conc} and Markov's inequality, we have $|C| \ge \frac12|A|
\ge 2^{n-\delta n -1}$. We now use some geometry.

\begin{fact}[Generalization of {\cite[Lemma 3.1]{Sherstov11ghd}}] \label{ap-fact1}
  Let $\delta>0$ be a sufficiently small constant and let $n$ be large enough.
  Put $k = \ceil{\sqrt\delta n}$.  Suppose $C \ceq \b^n$ has size $|C| \ge
  2^{n-\delta n -1}$. Then there exist $x_1,\ldots,x_k\in A$ such that
  \begin{equation} \label{ap-eq:near-ort}
    \forall\, j\in \{1,\ldots,k\}, ~\text{we have}~~ 
    \| \proj_{\spn\{x_1,x_2,\ldots,x_{j-1}\}}x_j \| \le 2\delta^{1/4}\sqrt{n}
    \, ,
  \end{equation}
\end{fact}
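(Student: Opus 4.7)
The plan is a greedy construction with a counting argument based on Talagrand's concentration inequality for linear-subspace projections. Set $V_0 = \{0\}$ and suppose we have chosen $x_1, \ldots, x_{j-1} \in C$ for some $j \le k$ so that \eqref{ap-eq:near-ort} holds up to index $j-1$. Let $V_{j-1} = \spn\{x_1, \ldots, x_{j-1}\}$, so $\dim V_{j-1} \le j-1 \le k-1$. Call a point $x \in \b^n$ \emph{bad} if $\|\proj_{V_{j-1}} x\| > 2\delta^{1/4}\sqrt{n}$. The whole task reduces to showing that, for every such $V_{j-1}$, the bad set does not cover $C$; any non-bad element of $C$ can then serve as $x_j$.

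To bound the bad set, recall Talagrand's inequality (in the form used as Sherstov's Fact~2.2, which was already flagged in the excerpt): for a uniformly random $X \in \b^n$, $\|\proj_{V_{j-1}} X\|$ is sharply concentrated around its median, which differs from $\sqrt{\dim V_{j-1}}$ by only an absolute constant. Since
\[
  \sqrt{\dim V_{j-1}} \;\le\; \sqrt{k-1} \;\le\; \sqrt{\sqrt{\delta}\,n+1} \;\le\; (1+o(1))\,\delta^{1/4}\sqrt{n},
\]
the threshold $2\delta^{1/4}\sqrt{n}$ sits a constant multiplicative factor above the median for all sufficiently small $\delta$ and large $n$. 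Talagrand then gives
\[
  \Pr_{X\in_R \b^n}[X \text{ is bad}] \;\le\; \exp\!\big(-c\,(\delta^{1/4}\sqrt{n})^2\big) \;=\; \exp(-c\sqrt{\delta}\,n),
\]
for an absolute constant $c > 0$. Hence the number of bad points in $\b^n$ is at most $2^n \exp(-c\sqrt{\delta}\,n)$.

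The final step is arithmetic: we need this to be strictly less than $|C| \ge 2^{n - \delta n - 1}$, i.e.\ $\exp(-c\sqrt{\delta}\,n) < 2^{-\delta n - 1}$, which rearranges to $c\sqrt{\delta} > \delta \ln 2 + O(1/n)$. Because $\sqrt{\delta}$ dominates $\delta$ for small $\delta$, this holds once $\delta$ is sufficiently small and $n$ sufficiently large, uniformly in $j \le k$. Thus at each stage $C$ contains some non-bad $x_j$, and the induction produces $x_1, \ldots, x_k$ as required.

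The only real obstacle is aligning Talagrand's concentration statement (around the median) with the quantity $\sqrt{\dim V_{j-1}}$ that we actually want to use, and making sure the $2$ in the constant $2\delta^{1/4}\sqrt{n}$ absorbs the median/mean gap and the looseness $\sqrt{k-1} \le \delta^{1/4}\sqrt{n}(1+o(1))$; this is exactly where the choice $k = \lceil\sqrt{\delta}\,n\rceil$ is calibrated so that the tail probability $e^{-c\sqrt{\delta}n}$ decays strictly faster than $2^{-\delta n}$. Everything else (the greedy setup, the Markov-style extraction of a point in $C$, the simple subspace-dimension accounting) is routine and essentially identical to the first part of Sherstov's argument for $\RR(\ghd_n) = \Omega(n)$; the only ``generalization'' over Sherstov is taking $k$ slightly larger and using a correspondingly larger threshold $2\delta^{1/4}\sqrt{n}$ instead of his $\delta^{1/4}\sqrt{n}$, which is precisely what the factor of $2$ in \eqref{ap-eq:near-ort} accommodates.
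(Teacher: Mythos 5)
Your proposal is correct and follows essentially the same route as the paper: a greedy construction at each step $j$, Talagrand's concentration inequality (in the Alon--Spencer / Sherstov Fact~2.2 form) giving a $2^{-\Theta(\sqrt{\delta}\,n)}$ tail for the bad set, and the comparison $2^{-\Theta(\sqrt{\delta}\,n)} < 2^{-\delta n - 1}$ for small $\delta$ to extract a non-bad point of $C$. Your extra care with the median-vs-$\sqrt{\dim}$ gap and the explicit arithmetic at the end is a slightly fuller account of the same argument, and you also correctly use $C$ rather than $A$ throughout, which silently fixes a typo ($A$ versus $C$) in the paper's statement and proof.
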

\begin{proof}
  Having chosen $x_1,\ldots,x_{j-1}$ (where $j \le k$), we apply the appropriate
  variant of Talagrand's concentration
  inequality~\cite[Theorem~7.6.1]{AlonSpencer-book} to obtain that $\|
  \proj_{\spn\{x_1,\ldots,x_{j-1}\}}x_j \|$ is sharply concentrated around
  $\sqrt{\dim\spn\{x_1,\ldots,x_{j-1}\}} \le \sqrt k$. In particular, there is
  an absolute constant $c$ such that
  \[
    \Pr_{x_j \in_R \b^n}\left[ 
      \| \proj_{\spn\{x_1,\ldots,x_{j-1}\}}x_j \| > 2\delta^{1/4}\sqrt{n}
    \right] \le 2^{-c\sqrt\delta n} \, .
  \]
  On the other hand $\Pr_{x \in_R \b^n}[x\in A] \ge 2^{-\delta n -1}$, which is
  larger than the above estimate if $\delta$ is sufficiently small. Therefore,
  we can pick a suitable $x_j$ to continue.
\end{proof}

From now on, fix the ``near-orthogonal'' set of vectors $x_1,\ldots,x_k$, with
$k = \ceil{\sqrt\delta n}$, given by Fact~\ref{ap-fact1}.  Recall that
$\widetilde B := \{\tilde y \in \R^n: \sign(\tilde y)\in B\}$. We define a
random variable $\widetilde Y$ correlated with $Y$ as follows. Let
$(Y_1,\ldots,Y_n)$ be the coordinates of $Y$; then define $\widetilde Y_j =
Y_j|W_j|$, where $W_j \sim \gamma$ and put $\widetilde Y = (\widetilde Y_1,
\ldots, \widetilde Y_n)$. Notice that the resulting distribution of
$\widetilde Y$ is exactly $\gamma^n|_{\widetilde B}$.  We now define the
random variable $Q_j$ and its escape probability $p_j^*$ as follows.
\[
  Q_j := \frac{\ip{x_j}{\widetilde Y}}{\sqrt n} \, ; \quad
  p_j^* := \Pr\big[ |Q_j| > (c+\alpha)b \big] \, .
\]
We shall eventually fix a particular index $j$ and choose suitable constants $c$ and
$\alpha$ above. As mentioned in the overview, the proof will hinge on a
careful analysis of this escape probability.

\subsubsection*{Lower Bounding the Escape Probability}

We begin the study by showing that there exists an index $j \in
\{1,\ldots,k\}$ such that $Q_j$ behaves quite similarly to a mixture of
shifted standard normal variables (i.e., variances close to $1$, but arbitrary
means). This will in turn yield a lower bound on the corresponding $p_j^*$.

Let $\tilde x_1, \ldots, \tilde x_k$ be the (truly) orthogonal vectors
obtained from $x_1,\ldots,x_k$ by the Gram-Schmidt process, i.e., $\tilde x_i
:= x_i - \proj_{\spn\{x_1,\ldots,x_{i-1}\}} x_i$.  For $i \in \{1,\ldots,k\}$,
put $x_i^* = \tilde x_i/\nm{\tilde x_i}$, and let $x_{k+1}^*, \ldots, x_n^*$ be a
completion of these vectors to an {\em orthonormal} basis of $\R^n$.
Expressing $\widetilde Y$ in this basis, and noting that
$\ip{x_j}{x_i^*} = 0$ for all $i>j$ in step~\eqref{ap-eq:higher-zerop} below, we
derive
\begin{align}
  Q_j
  &= \frac1{\sqrt{n}} \lip {x_j} {\sum_{i=1}^n\ip {\widetilde Y}{x_i^*} x_i^*} \notag \\
  &= \sum_{i=1}^j\frac{\ip {x_j} {x_i^*}}{\sqrt n} \ip {\widetilde Y}{x_i^*} \label{ap-eq:higher-zerop} \\
  &= \frac{\ip {x_j}{x_j^*}}{\sqrt n} \ip{\widetilde Y}{x_j^*} +
    \sum_{i=1}^{j-1}\frac{\ip {x_j}{x_i^*}}{\sqrt n}
      \ip {\widetilde Y}{x_i^*} \notag \\
  &= r_j Z_j + S_j \, ,\label{ap-eq:ip-rzs}
\end{align}
where we define
\[
  r_j := \frac{\ip{x_j}{x_j^*}}{\sqrt n} \, , \quad
  Z_j := \ip{\widetilde Y}{x_j^*} \, , \quad
  S_j := 
    \sum_{i=1}^{j-1}\frac{\ip {x_j}{x_i^*}}{\sqrt n}
      \ip {\widetilde Y}{x_i^*} \, .
\]
The Pythagorean theorem says that
  $\ip{x_j}{x_j^*}^2 = \|x_j\|^2 - \|
  \proj_{\spn\{x_1,x_2,\ldots,x_{j-1}\}}x_j \|^2$.
Recalling that $\|x_j\| = \sqrt{n}$ and using~\eqref{ap-eq:near-ort}, we conclude
that
\begin{equation} \label{ap-eq:rj-bounds}
  \forall\, j\in \{1,\ldots,k\}, ~\text{we have}~~ 
  1 - 4\sqrt\delta \le r_j \le 1 \, .
\end{equation}

\begin{lemma} \label{ap-lem:gauss-like-coord}
  There exists $j\in\{1,\ldots,k\}$ such that $\DD_\gamma(Z_j \mid S_j) \le
  \sqrt\delta$.
\end{lemma}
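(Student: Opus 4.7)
The plan is to exploit the chain rule for relative entropy in the orthonormal basis $x_1^*,\ldots,x_n^*$, together with a pigeonhole over $j\in\{1,\ldots,k\}$. The driving observation is that the set $\widetilde B$, being defined by a sign pattern in $\R^n$, has relatively large standard-Gaussian mass: by the symmetry of $\gamma^n$ under coordinate sign flips, $\gamma^n(\widetilde B)=|B|/2^n\ge 2^{-\delta n}$ by \eqref{ap-eq:largerec}. Since $\widetilde Y\sim\gamma^n|_{\widetilde B}$, a one-line density computation then gives
$$\DD_\gamma(\widetilde Y)\;=\;\ln\bigl(1/\gamma^n(\widetilde B)\bigr)\;\le\;\delta n\ln 2.$$

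Next, set $T_i:=\ip{\widetilde Y}{x_i^*}$ for $i=1,\ldots,n$. Because $\{x_i^*\}$ is orthonormal and $\gamma^n$ is rotationally invariant, the map $\widetilde Y\mapsto(T_1,\ldots,T_n)$ pushes the reference $\gamma^n$ forward to $\gamma^n$; in particular the $T_i$ are i.i.d.\ $\gamma$ under the reference. Hence $\DD_\gamma(T_1,\ldots,T_n)=\DD_\gamma(\widetilde Y)\le\delta n\ln 2$, and since the reference is a product measure, the standard chain rule for relative entropy yields
$$\sum_{i=1}^n\DD_\gamma(T_i\mid T_1,\ldots,T_{i-1})\;\le\;\delta n\ln 2.$$

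The last step is to replace full conditioning on $T_1,\ldots,T_{j-1}$ by conditioning on the single statistic $S_j$. Observe that $Z_j=T_j$, and that $S_j=\sum_{i<j}\ip{x_j}{x_i^*}T_i/\sqrt n$ is a deterministic (linear) function of $T_1,\ldots,T_{j-1}$. Applying the data-processing inequality to the map $(z,t_1,\ldots,t_{j-1})\mapsto(z,S_j)$ therefore gives $\DD_\gamma(Z_j\mid S_j)\le\DD_\gamma(T_j\mid T_1,\ldots,T_{j-1})$ for each $j$. Summing over $j\in\{1,\ldots,k\}$ and averaging, some such $j$ achieves
$$\DD_\gamma(Z_j\mid S_j)\;\le\;\frac{\delta n\ln 2}{k}\;\le\;\sqrt\delta,$$
using $k=\lceil\sqrt\delta\,n\rceil$ and $\ln 2<1$, as required.

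The only substep needing a bit of care is the data-processing step in its conditional form; but this follows directly from the standard identity $\DD_\gamma(A\mid C)=D(P_{A,C}\,\|\,\gamma\otimes P_C)$ (valid because $A\sim\gamma$ is independent of $C$ under the reference), combined with the deterministic data-processing inequality for $D(\cdot\,\|\,\cdot)$ applied to the coarsening $T_{<j}\mapsto S_j$. Everything else is bookkeeping with the Pythagorean/Gram--Schmidt identities already set up in the excerpt.
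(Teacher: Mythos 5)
Your proof is correct and follows essentially the same route as the paper: lower-bound $\gamma^n(\widetilde B)$ by $2^{-\delta n}$, apply the chain rule for relative entropy in the basis $x_1^*,\ldots,x_n^*$, and pigeonhole over the first $k=\lceil\sqrt\delta\,n\rceil$ indices. Your explicit data-processing argument for passing from conditioning on $\ip{\widetilde Y}{x_1^*},\ldots,\ip{\widetilde Y}{x_{j-1}^*}$ to conditioning on $S_j$ is a valid (and welcome) elaboration of a step the paper only asserts, since $S_j$ is a function of those coordinates.
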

\begin{proof}
  Since $|B|\ge 2^{n-\delta n}$, we have $\gamma^n(\widetilde{B}) \ge
  2^{-\delta n}$. By definition, we have
  $\DD_{\gamma}(\gamma^{n}|_{\tilde{B}}) = -\ln\gamma^{n}(\tilde{B}) \le
  (\ln2) \delta n \le \delta n$. On the other hand, by the chain rule for
  relative entropy, we have 
  \[
    \DD_{\gamma}(\gamma^{n}|_{\tilde{B}}) 
    = \DD_\gamma(\widetilde Y)
    = \DD_\gamma\left(\ip{\widetilde Y}{x_1^*},\ldots,\ip{\widetilde Y}{x_n^*}\right)
    = \sum_{j=1}^n \DD_\gamma\left(\ip{\widetilde Y}{x_j^*} \mid
        \ip{\widetilde Y}{x_1^*},\ldots,\ip{\widetilde Y}{x_{j-1}^*}\right) \, .
  \]
  Recalling that $k = \ceil{\sqrt{\delta}n}$, we deduce that there exists an
  index $j\in\{1,\ldots,k\}$ such that 
  \begin{equation} \label{ap-eq:indj-gamdiv}
     \DD_\gamma\left(\ip{\widetilde Y}{x_j^*} \mid
      \ip{\widetilde Y}{x_1^*},\ldots,\ip{\widetilde Y}{x_{j-1}^*}\right) 
    \le \sqrt\delta \, .
  \end{equation}
  Since $S_j$ is a function of $\ip{\widetilde Y}{x_1^*},\ldots,\ip{\widetilde
  Y}{x_{j-1}^*}$, we conclude that $\DD_\gamma(Z_j \mid S_j) \le \sqrt\delta$.
\end{proof}

For the rest of our proof, we fix an index $j$ as guaranteed by
Lemma~\ref{ap-lem:gauss-like-coord}. We put $r = r_j, Z = Z_j, S = S_j$, $Q =
Q_j$, and $p^* = p_j^*$. Now define the set
\[
  \cS = \{s\in\R:\, \DD_\gamma(Z \mid S=s) \le \delta^{1/4}\} \, ,
\]
so that $\Pr[S\notin\cS] \le \delta^{1/4}$ by Markov's inequality. Clearly,
either $\Pr[S \ge 0 \mid S\in\cS] \ge \frac12$ or $\Pr[S \le 0 \mid S\in\cS]
\ge \frac12$. In what follows, we shall assume that the former condition
holds; it will soon be clear that this does not lose generality. Under this
assumption we have
\begin{equation} \label{ap-z-conditioned}
  \DD_\gamma(Z \mid S \ge 0 \wedge S\in\cS) \le 2\delta^{1/4} \, .
\end{equation}
Therefore, by Pinsker's inequality~\cite{CoverThomas-book}, the statistical
distance between the distribution $\gamma$ and the distribution of $(Z \mid S
\ge 0 \wedge S\in\cS)$ is at most $\sqrt{2(2\delta^{1/4})} = 2\delta^{1/8}$.
Using this fact below, we get
\begin{align}
  p^* 
  &\ge \Pr[Q > (c+\alpha)b] \notag \\
  &= \Pr[rZ+S \ge (c+\alpha)b \mid S\ge 0 \wedge S\in\cS]\cdot
    \Pr[S\ge 0 \mid S\in\cS]\cdot \Pr[S\in\cS] \notag \\
  &\ge \textstyle \frac12(1-\delta^{1/4}) \Pr[rZ+S \ge (c+\alpha)b 
     \mid S\ge 0 \wedge S\in\cS] \notag \\
  &\ge \textstyle \frac12(1-\delta^{1/4}) \Pr[Z \ge (c+\alpha)b/r 
     \mid S\ge 0 \wedge S\in\cS] \notag \\
  &\ge \textstyle \frac12(1-\delta^{1/4}) 
    \left(\tail((c+\alpha)b/r) - 2\delta^{1/8}\right) \notag \\
  &\ge \frac{1-\delta^{1/4}}{2} \left(\tail\left(
    \frac{(c+\alpha)b}{1-4\sqrt\delta}\right) - 2\delta^{1/8}\right) \, ,
    \label{ap-eq:escape-lb}
\end{align}
where the final step uses the lower bound on $r$ given by~\eqref{ap-eq:rj-bounds}.

\subsubsection*{Upper Bounding the Escape Probability}

Recall that we had fixed a specific index $j$ after the proof of
Lemma~\ref{ap-lem:gauss-like-coord}, and that $Q = Q_j = \ip{x_j}{\widetilde
Y}/\sqrt{n}$.  We shall now explore the relation between $\ip{x_j}{Y}$ and
$\ip{x_j}{\widetilde Y}$ to upper bound the escape probability.  At this point
it would help to review the discussion of the relation between $Y$ and
$\widetilde Y$ at the beginning of Section~\ref{ap-subsec:anti-conc-proof}.

For simplicity, we put $x := x_j$ and assume, w.l.o.g., that
$x=(1,1,\ldots,1)$ so that $\ip{x}{y}=\sum_{i=1}^{n}y_{i}$. This is legitimate
because, if $x_{i}= -1$, we can flip $x_{i}$ to 1 and $y_{i}$ to $-y_{i}$
without changing $\ip{x}{y}$.

Recall that each coordinate $\widetilde{Y}_{i}$ of $\widetilde{Y}$ has the
same distribution as $Y_i|W_i|$, where the variables $\{W_i\}$ are independent
and each $W_i \sim \gamma$. Define $T := \sum_{i=1}^n Y_i/\sqrt{n}$; note that
$T$ is a {\em discrete} random variable.  After some reordering of
coordinates, we can rewrite 
\[
  \sqrt{n}\,Q = \ip{x}{\widetilde{Y}}
  = \left( |W_1| + |W_2| + \cdots + \big|W_{\frac{n}{2}+\frac{T\sqrt{n}}{2}}\big| \right)
    - \left( \big|W_{\frac{n}{2}+\frac{T\sqrt{n}}{2}+1}\big| + \cdots + |W_n| \right) \, .
\]
Each $|W_i|$ has a so-called {\em half normal} distribution. This is a
well-studied distribution: in particular, for each $i$, we know that
\[
  \E\big[\, |W_i| \,\big] = \sqrt{\frac2\pi} \, , \quad 
  \Var\big[\, |W_i| \,\big] = 1 -
  \frac2\pi \, .
\]
Thus, for each value $t$ in the range of $T$, we have $\E[\sqrt{n}\,Q \mid T =
t] = t\sqrt{2n/\pi}$ by linearity of expectation, and $\Var[\sqrt{n}\,Q \mid T =
t] = (1-2/\pi)n$ by the independence of the variables $\{W_i\}$. The
half-normal distribution is well-behaved enough for us to apply Lindeberg's
version of the central limit theorem~\cite{Feller-book}: doing so tells us that
as $n$ grows, the distribution of
\[
  \frac{\sqrt{n}\,Q^{(t)} -
    \E[\sqrt{n}\,Q^{(t)}]}{\sqrt{\Var[\sqrt{n}\,Q^{(t)}]}}
  = \frac{Q^{(t)} - t\sqrt{2/\pi}}{\sqrt{1-2/\pi}}
\]
converges to $\gamma$, where $Q^{(t)} = (Q \mid T=t)$. In other words, the
distribution of $Q^{(t)}$ converges to the (shifted and scaled) normal
distribution $\mathcal{N}(t\sqrt{2/\pi}, 1-2/\pi)$. Therefore, the
distribution of $Q$ converges to a mixture of such distributions.
Fix the constants
\[
  c := \sqrt{2/\pi} \, ; \quad
  \sigma := \sqrt{1-2/\pi} \, .
\]
Then the distribution of $Q$ converges to that of $V + cT$,
where $V \sim \mathcal{N}(0,\sigma^2)$ is independent of $T$.
Using the convergence, we can easily prove the following claim.

\begin{claim} \label{ap-lem:exp-coshbxn}
  For sufficiently large $n$, we have
  $p^* = \Pr\big[ |Q| > (c+\alpha)b \big]
  \le 2\Pr\big[ |V + cT| > (c+\alpha)b \big]$.
  \qed
\end{claim}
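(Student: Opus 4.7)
The approach is to leverage the conditional CLT convergence just established, together with a quantitative Berry--Esseen bound, in order to reduce to a pointwise (in $t$) comparison. Conditioning on $T$ gives
\[
  p^* = \sum_t \Pr[T=t]\,\Pr[|Q^{(t)}|>(c+\alpha)b], \qquad \Pr[|V+cT|>(c+\alpha)b] = \sum_t \Pr[T=t]\,\Pr[|V+ct|>(c+\alpha)b],
\]
where $Q^{(t)} := (Q \mid T=t)$. It therefore suffices to prove, for every $t$ in the (discrete) support of $T$ and all sufficiently large $n$ uniformly in $t$, the pointwise inequality $\Pr[|Q^{(t)}|>(c+\alpha)b]\le 2\Pr[|V+ct|>(c+\alpha)b]$.

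Each $Q^{(t)}$ is a standardized sum of $n$ independent signed half-normal summands --- $m=n/2+t\sqrt{n}/2$ of them with a plus sign and $n-m$ with a minus sign --- having mean $ct$ and variance $\sigma^2$, matching the moments of $V+ct$ exactly. Applying Berry--Esseen to the centered summands yields
\[
  \sup_{x\in\R}\big|\Pr[Q^{(t)}\le x]-\Pr[V+ct\le x]\big| \le \frac{C}{\sqrt n},
\]
where $C$ is proportional to the Lyapunov ratio $\rho/\sigma^3$ for the centered half-normal $|W|-\sqrt{2/\pi}$. Crucially, $C$ is independent of $t$: flipping the sign of a summand preserves its centered third moment, so only the split $m$ depends on $t$ while the Lyapunov ratio does not. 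Evaluating at $x=\pm(c+\alpha)b$ yields the additive estimate $\Pr[|Q^{(t)}|>(c+\alpha)b]\le \Pr[|V+ct|>(c+\alpha)b]+2C/\sqrt n$.

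To convert this additive slack into a multiplicative factor of $2$, split on whether the limit probability exceeds $1/2$. If $\Pr[|V+ct|>(c+\alpha)b]\ge 1/2$ the bound is trivial from $\Pr[|Q^{(t)}|>(c+\alpha)b]\le 1$. Otherwise the contrapositive forces $|ct|\le (c+\alpha)b$, and then
\[
  \Pr[|V+ct|>(c+\alpha)b]\ge \Pr[V>(c+\alpha)b-ct]\ge \tail(2(c+\alpha)b/\sigma),
\]
a positive constant depending only on $b,c,\alpha$. Taking $n$ large enough that $2C/\sqrt n$ is strictly less than this constant upgrades the additive inequality to the factor-$2$ bound, and summing against $\Pr[T=t]$ proves the claim. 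The main technical obstacle is the uniformity of $C$ in $t$, but as noted this reduces to the observation that the summand distribution (up to sign) is always the same centered half-normal, whose third moment is finite and does not depend on $t$ at all.
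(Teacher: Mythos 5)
Your proof is correct and fills in a gap that the paper itself leaves open: the paper only asserts conditional CLT convergence $Q^{(t)} \to \mathcal{N}(ct,\sigma^2)$ and then states the claim with a terminal \qed, glossing over the fact that plain distributional convergence, being non-uniform and non-quantitative, does not immediately yield an inequality valid for finite $n$ across all $t$ in the (itself $n$-dependent) support of $T$. Your use of Berry--Esseen is exactly the right tool to make the argument rigorous: after conditioning on $T=t$, the centered summands $\pm(|W_i|-c)/\sqrt{n}$ are independent, mean-zero, with variance and third absolute moment independent of the sign, so the Lyapunov ratio --- and hence the $O(1/\sqrt{n})$ error bound --- is uniform in $t$. (One small wording fix: you should say ``third \emph{absolute} moment'' rather than ``third moment,'' since the latter flips sign under negation; it is the absolute moment that enters Berry--Esseen.) The subsequent case split is also the right move: when $\Pr[|V+ct|>(c+\alpha)b]\ge \tfrac12$ the factor-$2$ bound is vacuous, and otherwise the limit probability is bounded below by the constant $\tail(2(c+\alpha)b/\sigma)$, which absorbs the additive $O(1/\sqrt n)$ slack for large enough $n$. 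This is a clean and complete argument; it is, as far as one can tell, what the authors had in mind when they wrote ``Using the convergence, we can easily prove the following claim,'' but they did not record it.
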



Recalling that $x\in C$, and using~\eqref{ap-eq:c-def}, we have $\Pr\big[|T| >
b\big] \le 2\eps$. This lets us upper bound $p^*$ as follows.
\begin{align}
  \frac{p^*}{2}
  &\le \Pr\Big[ |V + cT| > (c+\alpha)b ~\Big|~ |T| \le b \Big] + \Pr\big[ |T| > b \big] \notag \\
  &\le \Pr\big[ |V| > \alpha b \big] + \Pr\big[ |T| > b \big] \notag \\
  &\le 2 \Pr[ V/\sigma > (\alpha/\sigma)b ] + 2\eps \notag \\
  &= 2\tail((\alpha/\sigma)b) + 2\tail(2.01b) \, , \label{ap-eq:escape-ub}
\end{align}
where in the last step we use the definition of $\eps$ as given in~\eqref{ap-eq:conc}.

\subsubsection*{Completing the Proof}

To complete the proof of the anti-concentration lemma, we combine the lower
bound~\eqref{ap-eq:escape-lb} with the upper bound~\eqref{ap-eq:escape-ub} to obtain
\[
  \frac{1-\delta^{1/4}}{2} \left(\tail\left(
    \frac{(c+\alpha)b}{1-4\sqrt\delta}\right) - 2\delta^{1/8}\right) 
  \le 4\tail\left(\frac{\alpha b}{\sigma}\right) + 4\tail(2.01b) \, .
\]
Recall that we had started by assuming the negation of
Lemma~\ref{ap-lem:anti-conc}, in Eqs.~\eqref{ap-eq:largerec}
and~\eqref{ap-eq:conc}. Thus, the above inequality is supposed to hold for
some constant $b\ge 66$ and all constants $\delta > 0$. However, if set
$\alpha = 2.01\sigma$, we can get a contradiction: as $\delta\to 0$, the
left-hand side approaches $\frac12 \tail((c+2.01\sigma)b)$, whereas the
right-hand side is $8\tail(2.01b)$.  Plugging in the values of $c$ and
$\sigma$, we note that $c+2.01\sigma < 2.01$. Therefore, if we choose
$\delta$ small enough, we have a contradiction.

\section{Acknowledgment}

We are grateful to Ryan O'Donnell for an important technical discussion and some clarifications.

\eat{
Let $g$ be the function defined as follows:
\begin{align*}
g(x,y) 
&=f(x,y) &&\text{ if } \ip{x}{y} \le b\sqrt{n}\\
&=1-f(x,y) && \text{ if } \ip{x}{y} > b\sqrt{n} \, ,
\end{align*}
for some constant $b>1$ that we will fix later.

Assume that $\text{Corr}^{1,\mu}_{\eps_{1}}(g) > 2^{-\delta_{1} n}$ for any constant $\delta_{1} >0$.
Using the definition of corruption bound, we can find two large sets $A,B \subseteq \b^n$
such that $\mu(A) \ge 2^{-\delta_{1} n}$, $\mu(B) \ge 2^{-\delta_{1} n}$ and
\begin{equation}\label{ap-correqu}
   \Pr_{x\in A, y\in B}[ g(x,y) = 0]=\Pr_{x\in A, y\in B} \left[ \ip{x}{y}
\notin [0,b\sqrt{n}] \right] < \eps_{1}.
\end{equation}

}

\bibliographystyle{alpha}
\bibliography{../../super}

\end{document}